\PassOptionsToPackage{table}{xcolor}
\documentclass{article}
\usepackage{iclr2026_conference,times}

\usepackage{amsmath,amssymb,amsthm,mathtools}
\usepackage{aliascnt}
\usepackage{booktabs}
\usepackage{multirow}
\usepackage{graphicx}
\usepackage{xcolor}
\usepackage{enumitem}
\usepackage{algorithm,algpseudocode}
\usepackage{placeins}
\usepackage{microtype}
\usepackage{tikz}
\usetikzlibrary{arrows.meta,positioning,fit,backgrounds,calc,shapes.geometric}
\usepackage{hyperref}
\usepackage{url}
\usepackage[capitalise,noabbrev]{cleveref}

\definecolor{fovblue}{HTML}{1B4F72}
\definecolor{fovred}{HTML}{C0392B}

\newcommand{\reals}{\mathbb{R}}
\newcommand{\integers}{\mathbb{Z}}
\newcommand{\Exp}{\mathbb{E}}
\newcommand{\Prb}{\Pr}
\DeclareMathOperator{\supp}{supp}
\DeclareMathOperator{\rank}{rank}
\DeclareMathOperator{\diam}{diam}
\DeclareMathOperator{\per}{per}
\DeclareMathOperator{\dist}{dist}
\DeclareMathOperator{\nnz}{nnz}
\DeclareMathOperator{\intr}{int}

\newcommand{\vars}{[n]}                        % variable index set
\newcommand{\cons}{[m]}                        % constraint index set
\newcommand{\inc}{x^{\star}}                   % incumbent
\newcommand{\opt}{v^{\star}}                   % reference objective value
\newcommand{\sub}[1]{\mathrm{SUB}(#1,\inc)}    % restricted subproblem
\newcommand{\feas}{\mathcal{F}}                % feasible set
\newcommand{\gain}{\Delta}                     % improvement of a destroy set

\newcommand{\cut}[1]{\mathrm{cut}(#1)}
\newcommand{\cutE}[1]{\mathrm{cut}_{E}(#1)}
\newcommand{\inE}[1]{\mathrm{in}_{E}(#1)}
\newcommand{\Srand}{S_{\mathrm{rand}}}
\newcommand{\seff}{s_{\mathrm{eff}}}
\newcommand{\thetaE}{\theta_{E}}

\newcommand{\lay}{\varphi}                     % variable layout
\newcommand{\ents}{\mathcal{E}}                % entity set
\DeclareMathOperator{\ent}{ent}
\DeclareMathOperator{\cell}{cell}
\newcommand{\Vpq}{V_{pq}}
\newcommand{\canvas}{\mathcal{C}}
\newcommand{\Psel}{P_{\mathrm{sel}}}
\newcommand{\region}{R}
\newcommand{\SR}{S_{\region}}
\newcommand{\Rlp}{R_{\mathrm{lp}}}             % dual refresh period, in rounds
\newcommand{\regfam}{\mathcal{R}_{C,B}}        % admissible region family (A4)

\newcommand{\cdelta}{c_{\delta}}
\newcommand{\cg}{c_{g}}
\newcommand{\Bpar}{B_{\partial}}               % boundary components
\newcommand{\Bcc}{B_{\mathrm{cc}}}             % connected components
\newcommand{\Lampar}{\Lambda_{\partial}}       % mean density on the boundary
\newcommand{\Lammax}{\Lambda_{\max}}           % global maximum density
\newcommand{\rhomax}{\rho_{\max}}
\newcommand{\Sstar}{S^{\star}}
\newcommand{\Rstar}{R^{\star}}
\newcommand{\Bstar}{B^{\star}}
\newcommand{\gstar}{g^{\star}}
\newcommand{\nstar}{n^{\star}}
\newcommand{\rhostar}{\rho^{\star}}
\newcommand{\Bdag}{B^{\dagger}}
\newcommand{\Bddag}{B^{\ddagger}}

\newcommand{\tsel}{\tau_{\mathrm{select}}}
\newcommand{\trep}{\tau_{\mathrm{repair}}}
\newcommand{\tnet}{\tau_{\mathrm{net}}}
\newcommand{\tlp}{\tau_{\mathrm{lp}}}          % cost of one relaxation solve
\newcommand{\PI}{P}                            % primal integral
\newcommand{\CPI}{P_{\alpha}}                  % confined primal integral

\newcommand{\pol}{\pi_{\theta}}
\newcommand{\heat}{M}                          % predicted heatmap
\DeclareMathOperator{\TopK}{TopK}
\DeclareMathOperator{\Sample}{Sample}

\newcommand{\Aloc}{\textup{(A1)}}
\newcommand{\Aden}{\textup{(A2)}}
\newcommand{\Adens}{\textup{(A2$'$)}}
\newcommand{\Areg}{\textup{(A3)}}
\newcommand{\Alip}{\textup{(A4)}}
\newcommand{\Bcontr}{\textup{(B1)}}
\newcommand{\Bplat}{\textup{(B1$'$)}}

\theoremstyle{plain}
\newtheorem{theorem}{Theorem}
\newaliascnt{proposition}{theorem}
\newtheorem{proposition}[proposition]{Proposition}
\aliascntresetthe{proposition}
\newaliascnt{lemma}{theorem}
\newtheorem{lemma}[lemma]{Lemma}
\aliascntresetthe{lemma}
\newaliascnt{corollary}{theorem}
\newtheorem{corollary}[corollary]{Corollary}
\aliascntresetthe{corollary}
\theoremstyle{definition}
\newaliascnt{definition}{theorem}
\newtheorem{definition}[definition]{Definition}
\aliascntresetthe{definition}
\theoremstyle{remark}
\newaliascnt{remark}{theorem}
\newtheorem{remark}[remark]{Remark}
\aliascntresetthe{remark}

\crefname{theorem}{Theorem}{Theorems}
\crefname{proposition}{Proposition}{Propositions}
\crefname{lemma}{Lemma}{Lemmas}
\crefname{corollary}{Corollary}{Corollaries}
\crefname{definition}{Definition}{Definitions}
\crefname{remark}{Remark}{Remarks}

\theoremstyle{plain}
\newtheorem*{proposition*}{Proposition}

\newcommand{\conditional}{\textnormal{\scriptsize\textsc{[conditional]}}}

\title{Recasting the Destroy Step of Large Neighborhood Search as Dense Segmentation}

\author{\textbf{Yang Liu}\textsuperscript{1},
\textbf{Yulin Huang}\textsuperscript{1},
\textbf{Jianshen Zhang}\textsuperscript{1},
\textbf{Yongzhi Qi}\textsuperscript{1}$^{\dagger}$
\\[0.6em]
\textsuperscript{1}Supply Chain Tech Team Y, JD.com\\
$^{\dagger}$Corresponding author.
}

\iclrfinalcopy % Uncomment for camera-ready version, but NOT for submission.

\begin{document}

\maketitle

\begin{abstract}
Large neighborhood search improves an incumbent by releasing selected variables and repairing the resulting subproblem under a time limit. FOVEA casts variable selection as dense semantic segmentation on a fixed $128\times128$ canvas. A small U-Net reads twelve semantic channels and predicts cell scores, which are decoded into a variable set subject to a fixed variable budget. Family-specific layouts connect the search state to the canvas, while one set of network weights serves four problem families. The FP32 network input occupies $786$\,kB across instance sizes. Our analysis bounds constraint coupling for spatially coherent regions, gives a rank certificate for rounds with zero possible improvement, and bounds the coupling advantage attainable on an expander family. On the tested families and hardware, FOVEA trails the strongest graph encoder at $10^{4}$ variables and overtakes it near $1.5\times10^{4}$; the cost model provides an approximate crossover estimate. At larger sizes, where the graph-encoder baselines exceed device memory, FOVEA reduces the primal integral by $12$ to $16\%$ relative to the best runnable baselines. On the expander family, FOVEA performs comparably to random selection, with a coupling advantage close to one.
\end{abstract}
\section{Introduction}
\label{sec:intro}

% Figure 1. Three bands: the constraint, the reformulation, the predictions.
% The panels are rendered from one illustrative instance built to satisfy the
% hypotheses of Section 3. All text is set here so that it matches the body font.
\suppressfloats[t]
\begin{figure}[t]
\centering
\begin{tikzpicture}[
  x=1cm, y=1cm,
  band/.style={fill=black!3, rounded corners=2pt, draw=none},
  rule/.style={draw=black!18, line width=0.4pt},
  tag/.style={fill=fovblue, text=white, font=\bfseries\scriptsize,
              inner sep=0pt, minimum width=3.4mm, minimum height=3.4mm,
              rounded corners=1pt},
  title/.style={font=\scriptsize\bfseries, anchor=west, inner sep=0pt},
  lab/.style={font=\tiny, align=center, inner sep=1pt},
  labl/.style={font=\tiny, align=left, inner sep=0pt, anchor=north west},
  pane/.style={inner sep=0pt, outer sep=0pt},
  data/.style={fill=white, draw=black!30, line width=0.35pt,
               inner sep=0.6pt, outer sep=0pt},
  box/.style={draw=black!55, rounded corners=1.5pt, align=center,
              font=\tiny, inner xsep=1pt, inner ysep=2pt,
              minimum height=6.5mm, text width=12mm},
  once/.style={box, dashed, draw=black!45},
  flow/.style={-{Latex[length=1.5mm,width=1.2mm]}, draw=black!65, line width=0.5pt},
  back/.style={flow, dashed},
  pred/.style={font=\tiny, align=center, inner sep=0pt, text width=30mm,
               anchor=north},
  hit/.style={pred, text=fovblue}
]

% =====================================================================
% Band A: the asset the layout creates, and the constraint it removes
% =====================================================================
\fill[band] (0,0) rectangle (13.9,-3.15);
\node[tag] at (0.30,-0.30) {A};
\node[title] at (0.55,-0.30)
  {Local layouts connect coupling to region boundaries};

\node[data] (spyraw) at (1.35,-1.75)
  {\includegraphics[height=15.5mm]{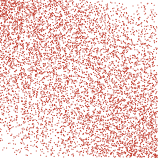}};
\node[data] (spylay) at (4.05,-1.75)
  {\includegraphics[height=15.5mm]{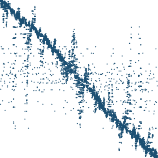}};
\draw[flow] (spyraw) -- node[lab, above=0.2mm] {$\lay$}
  node[lab, below=0.2mm] {MILP setup,\\$O(\nnz(A))$} (spylay);
\node[lab, anchor=north] at (1.35,-2.58) {rows of $A$,\\arbitrary order};
\node[lab, anchor=north] at (4.05,-2.58) {same $A$, indexed\\by canvas cells};

\draw[rule] (5.30,-0.70) -- (5.30,-2.95);

\node[labl, text width=37mm] at (5.55,-0.78)
  {Under the locality and regularity assumptions, the cut of a full region
   preimage scales with its \emph{perimeter}:
   $O(\kappa m\sqrt{sk\Bpar}/n)+\eta m$, compared with
   $\Omega(\seff km/n)$ for a random set of the same size (\Cref{prop:iso},
   \Cref{lem:random}). This motivates spatial neighborhood selection.};

\draw[rule] (9.55,-0.70) -- (9.55,-2.95);

\node[pane] (wall) at (10.65,-1.90)
  {\includegraphics[height=12mm]{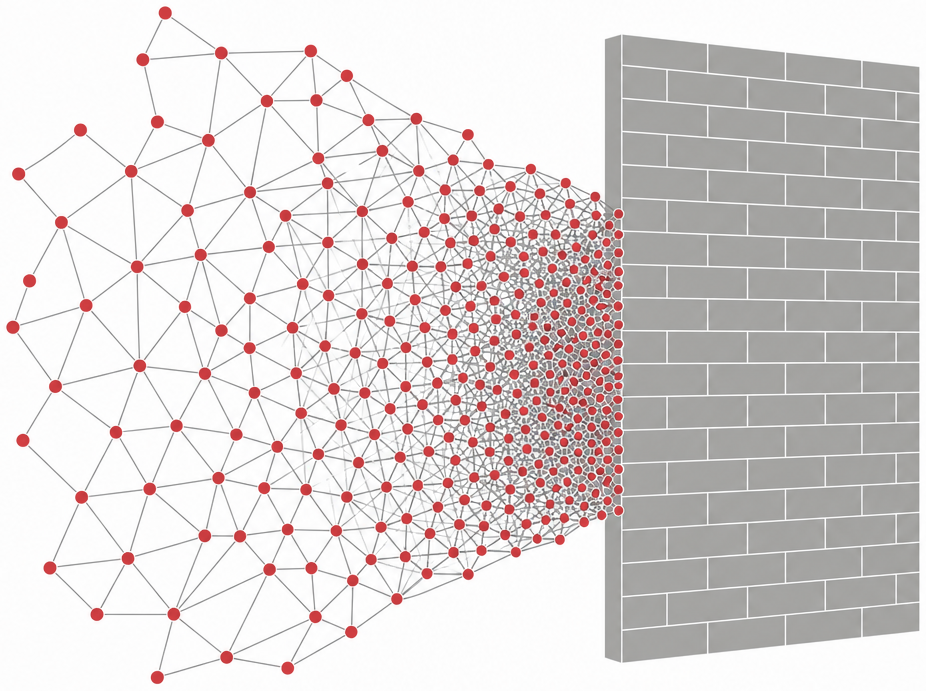}};
\node[labl, text width=21mm] at (11.60,-0.90)
  {Graph-encoder activations grow with $n$. The fixed canvas keeps the
   policy input size constant.};
\node[lab, anchor=north] at (10.65,-2.60) {activations $\propto n$};

% =====================================================================
% Band B: one round, at a size that does not depend on n
% =====================================================================
\fill[band] (0,-3.35) rectangle (13.9,-6.70);
\node[tag] at (0.30,-3.65) {B};
\node[title] at (0.55,-3.65)
  {Dense segmentation at any $n$: the network reads $786$\,kB};

\node[once] (lay) at (0.95,-5.15) {layout $\lay$\\\emph{setup}};
\node[pane] (can) at (2.95,-5.15)
  {\includegraphics[height=14mm]{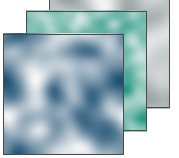}};
\node[box, text width=11mm] (net) at (5.40,-5.15) {U-Net $\pol$\\$1$--$5$M};
\node[data] (heat) at (7.75,-5.15)
  {\includegraphics[height=14mm]{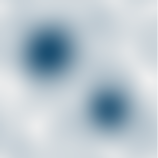}};
\node[data] (mask) at (10.10,-5.15)
  {\includegraphics[height=14mm]{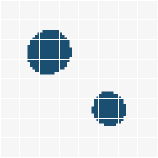}};
\node[box, text width=13mm] (rep) at (12.45,-5.15)
  {repair $\sub{S}$\\time limit $\trep$};

\draw[flow] (lay) -- (can);
\draw[flow] (can) -- (net);
\draw[flow] (net) -- (heat);
\draw[flow] (heat) -- (mask);
\draw[flow] (mask) -- (rep);

\node[lab, anchor=south] at (2.95,-4.35) {canvas $\canvas^{(t)}$,\\$128^{2}\times12$};
\node[lab, anchor=south] at (5.40,-4.35) {$O(HW)$,\\free of $n$};
\node[lab, anchor=south] at (7.75,-4.35) {field $\heat$\\over cells};
\node[lab, anchor=south] at (10.10,-4.35) {$\Sample$, $\TopK$:\\at most $k$ variables};
\node[lab, anchor=south] at (12.45,-4.35) {new incumbent\\$\inc$};

\draw[back] (rep.south) -- ++(0,-0.75) -| (can.south);
\node[lab, fill=black!3, inner sep=1.5pt] at (7.20,-6.28)
  {Changed-cell updates, $O(|S|+HW)$ with bounded incidence degree};

% =====================================================================
% Band C: what the analysis risks, and what the measurement says
% =====================================================================
\fill[band] (0,-6.90) rectangle (13.9,-8.95);
\node[tag] at (0.30,-7.20) {C};
\node[title] at (0.55,-7.20)
  {Analysis, experimental results and a certificate illustration};

\foreach \i/\claim/\meas in {%
  0/{coupling envelope: $\rhostar=1/s$ (\Cref{cor:rhostar})}/{peak at $\rho=0.10$, $s=9.6$}%
 ,1/{a crossover at $\nstar$, put at \mbox{$10^{4.1}$--$10^{4.4}$} (\Cref{prop:critical})}/{the curves cross at $10^{4.19}$}%
 ,2/{$\rank(A_{E(S),S})=k$ forbids improvement (\Cref{prop:rank})}/{$4000$ schematic points; full-rank gain $0$ by construction}%
 ,3/{expansion limits coupling advantage (\Cref{prop:expander})}/{$1.01\times$, $p=0.61$}}
{
  \node[pred] at ({1.925+3.35*\i},-7.50) {\claim};
  \draw[-{Latex[length=1.2mm,width=1.0mm]}, draw=fovblue!75, line width=0.5pt]
    ({1.925+3.35*\i},-8.14) -- ++(0,-0.22);
  \node[hit] at ({1.925+3.35*\i},-8.42) {\meas};
}
\foreach \i in {1,2,3}
  \draw[rule] ({0.25+3.35*\i},-7.42) -- ({0.25+3.35*\i},-8.80);

\end{tikzpicture}
\caption{FOVEA overview. \textbf{(A)} A precomputed local layout relates
the coupling of a full region preimage to its perimeter.
\textbf{(B)} Each round rasterises the search state into a fixed-size tensor,
predicts a field over cells, truncates the variables under the mask to at most
$k$, and solves the resulting repair formulation within the time budget.
The one-time layout shown here applies to MILP and routing. Changed cells are updated incrementally (\Cref{sec:incremental}).
\textbf{(C)} Coupling, crossover, and rank predictions are paired with
experimental outcomes and the rank schematic, including the expander control. Panels (A) and (B) use an
illustrative instance satisfying the hypotheses of \Cref{sec:analysis}.
Panel (C) summarises the text; its rank example uses generated points.}
\label{fig:hero}
\end{figure}

Mixed integer programming models routing, scheduling, network design and resource allocation. At large scales, decisions often require a feasible solution before branch-and-bound closes the optimality gap. Large neighborhood search \citep{shaw1998lns,pisinger2007alns} maintains a feasible incumbent, releases a small subset of its variables, solves the resulting subproblem under a time limit, and accepts improvements. The destroy step chooses one of the $\binom{n}{k}$ subsets of $k$ variables; a solver then repairs the selected neighborhood. Classical rules \citep{danna2005rins,fischetti2003localbranching,pisinger2007alns} combine hand-designed signals from the instance and incumbent. Learned operators adapt the selection rule to the search state.

Learned destroy operators often use representations indexed by the instance \citep{song2020generallns,wu2021rllns,sonnerat2021neurallns,huang2023cllns,li2021l2d}. For the full-graph encoders evaluated here, a message-passing layer costs $O(\nnz(A))$, and the graph and activations occupy device memory. On the tested families and hardware, these encoders reach the memory limit beyond a few times $10^{4}$ variables. This motivates a selector whose neural input and output dimensions remain fixed as the instance grows.

A spatial grid provides such a representation domain. Visual representations already support combinatorial optimisation through height maps for bin packing, rendered constraint matrices and images of tours \citep{zhao2021bpp,ling2021milp,steever2022mip,vnsolver2023,gimf2025,vitsp2025}. These applications include constructing solutions with a decoder or graph encoder. For destroy selection, the spatial output directly identifies candidate neighborhoods: a mask selects cells, and a decoder maps their entities to variables for repair.

Two close visual approaches highlight the roles of resolution and action geometry. \citet{gimf2025} fuse image and graph features and increase the total pixel count linearly with instance size to support scale transfer. \citet{vitsp2025} select axis-aligned bounding boxes on a rendered tour and repair their interiors exactly. FOVEA uses a fixed-resolution canvas and a mask over its cells, keeping the number of neural outputs independent of instance size. An entity-to-variable map provides the interface to each repair problem. This separates the spatial action from the family-specific definition of a released variable. The resolution and mask-shape comparisons (\Cref{tab:ablation-results}) measure the associated quality tradeoffs, including scale-dependent resolution and rectangular selections.

FOVEA rasterises the search state onto a fixed canvas and predicts cell scores with a U-Net. Entity mapping and budgeted decoding turn the selected cells into the variables released for repair. Spatial adjacency makes perimeter penalties, morphological post-processing and connected-component analysis available to the selector. Family-specific layouts provide the spatial interface: routing and general mixed integer programs reuse precomputed layouts, while scheduling updates its layout with the incumbent. Incremental rasterisation accounts for variable, constraint and periodic relaxation updates (\Cref{sec:incremental}).

The shared canvas protocol connects one segmentation policy to variable-level repair across routing, mixed integer programming, job shop scheduling and bin packing. Related cross-problem interfaces use heterogeneous graph types, token schemes or attribute vectors \citep{drakulic2024goal,zhou2024mvmoe,unico2025}; \Cref{app:related} compares these interfaces with the canvas. Our experiments separate the effects of spatial layout and learning: layout accounts for most of the gain on set cover and routing, and learning improves on the same-layout control. The analysis relates region boundaries to constraint coupling, gives a rank certificate for zero-improvement neighborhoods, and identifies coupling limits on expanders. These results motivate the layout, budget and resolution comparisons, including a resolution schedule based on \citet{gimf2025}.

\section{Preliminaries}
\label{sec:prelim}

We consider mixed integer linear programs $\min\{c^{\top}x:x\in\feas\}$ over $\feas=\{x:A_{j}x=b_{j}\ (j\in E),\ A_{j}x\le b_{j}\ (j\in\cons\setminus E),\ x_{i}\in\integers\ (i\in\mathcal{I}),\ l\le x\le u\}$, with variable set $\vars$ and constraint set $\cons$. The equality rows $E\subseteq\cons$ define the rank certificate in \Cref{prop:rank}; $\thetaE=|E|/m$ is their share. The support of constraint $j$ is $\supp(A_{j})=\{i:A_{ji}\neq0\}$ with size $s_{j}$, the mean support is $s=\nnz(A)/m$, and $d=\nnz(A)/n$ is the mean number of constraints per variable. Routing and scheduling enter this framework through their integer formulations and the entity mapping in \Cref{sec:entity}.

Large neighborhood search maintains a feasible incumbent $\inc$ and repeats two steps. The \emph{destroy} step chooses $S\subseteq\vars$ with $|S|=k$. The \emph{repair} step solves, under a time limit,
\begin{equation}
\sub{S}:\quad \min_{x}\;c^{\top}x
\quad\text{s.t.}\quad x\in\feas,\qquad x_{i}=\inc_{i}\;\;\forall i\notin S,
\label{eq:sub}
\end{equation}
and replaces $\inc$ if the result improves it. We write $\gain(S)=c^{\top}\inc-\min\{c^{\top}x:x\in\feas(\sub{S})\}$ for the improvement the best possible repair of $S$ would deliver.

Coupling helps characterise the freedom available in a destroy set. It counts the constraints straddling released and fixed variables,
\begin{equation}
\cut{S}=\bigl|\{\,j\in\cons \;:\; \supp(A_{j})\cap S\neq\emptyset \;\wedge\; \supp(A_{j})\setminus S\neq\emptyset \,\}\bigr|.
\label{eq:cut}
\end{equation}
A straddling constraint enters \eqref{eq:sub} as a constraint on $x_{S}$, with the fixed variables contributing a constant. \Cref{sec:analysis} characterises the resulting freedom through the rank of the equality submatrix: full column rank makes $\inc$ the unique feasible point and certifies zero possible improvement.

We evaluate solution quality over time using the primal integral. Let $\opt$ be a reference value and $\gamma(x)$ the primal gap: zero if $c^{\top}x=\opt$, one if the two have opposite signs, and $|c^{\top}x-\opt|/\max\{|c^{\top}x|,|\opt|\}$ otherwise. The case split keeps $\gamma\in[0,1]$ when incumbent and reference straddle zero, which makes the integral comparable across instances. Following \citet{berthold2013primal}, the primal gap function $p(t)$ is $1$ before the first feasible solution and $\gamma(\text{incumbent at }t)$ afterwards, and the primal integral is $\PI(T)=\int_{0}^{T}p(t)\,dt$. \Cref{app:metrics} also defines the confined primal integral of \citet{berthold2021confined}, an exponentially discounted measure with a finite infinite-horizon limit. Comparing it with $\PI$ assesses sensitivity to the evaluation horizon. Since $p(0)=1$, $\PI(T)$ contains a term equal to the time $t_{0}$ of the first feasible solution, which belongs to the shared initialisation rather than to any destroy policy; \Cref{fig:convergence} therefore draws that segment explicitly, and we keep $t_{0}$ and $\PI(T)-t_{0}$ apart.

We study a fixed absolute variable budget $k\in[10^{2},10^{3}]$, independent of $n$. The corresponding destroy fraction is $\rho=k/n$, which decreases as the instance grows. Holding a fraction fixed would instead give $k=\Theta(n)$: for example, $n=10^{6}$ and $\rho=0.1$ release $10^{5}$ variables for repair. \Cref{app:regime} describes the computational and coverage properties of the fixed-budget regime. Under the assumptions of \Cref{cor:rhostar}, the geometric lower bound on coupling advantage peaks at $\rhostar=1/s$. At fixed $k$, instances with $n>ks$ lie on the lower-$\rho$ side of this peak, where the bound rises with $\rho$.

\section{FOVEA}
\label{sec:method}

FOVEA evaluates a segmentation policy on a fixed-size canvas. Band B of \Cref{fig:hero} shows one round: rasterise the search state, predict cell scores, sample cells, decode variables within the budget, and repair the selected neighborhood. \Cref{alg:fovea} gives the search loop. The entity layer defines how each problem family connects its spatial representation to the variables of the repair problem.

\subsection{The entity layer}
\label{sec:entity}

The layout $\lay$ maps \emph{variables} to canvas positions. In routing, coordinates belong to customers while the decision variables are edges $x_{ij}$. An entity layer connects these two representations and defines the variable set used by $\cut{S}$ and \Cref{sec:analysis}. Each family supplies an entity set $\ents$, a row-stochastic placement $P\in\reals^{|\ents|\times HW}$ spreading each entity's unit mass over cells, and an association $\ent:\vars\to2^{\ents}$ recording the entities referenced by each variable. An entity is selected when its mass inside the chosen cells reaches a threshold $\vartheta$; a variable is released when all of its entities are selected:
\begin{equation}
\ents_{\mathrm{sel}}=\Bigl\{e\in\ents:\textstyle\sum_{c\in\Psel}P_{ec}\ge\vartheta\Bigr\},
\qquad
S=\bigl\{\,i\in\vars:\ent(i)\subseteq\ents_{\mathrm{sel}}\,\bigr\}.
\label{eq:entity-decode}
\end{equation}
Requiring \emph{all} entities selects an edge only when both endpoints are selected. For general programs the entity is the variable and the association is the identity; for routing the entity is the customer and $S$ contains candidate edges among selected customers (\Cref{tab:entities} gives all four instantiations). Routing uses a sparsified candidate edge set to support locality: on a complete graph, degree constraints span the canvas (\Cref{rem:eta}).

\subsection{Layout}
\label{sec:layout}

The layout aims to make constraints spatially local, as required by \Aloc{}. Routing uses native coordinates. For general programs, the constraint matrix supplies a two-dimensional sparsity pattern indexed by (constraint, variable). We reorder its rows and columns to concentrate nonzeros near the diagonal. A variable co-occurrence graph adds $\binom{s_{j}}{2}$ edges for a constraint of support $s_{j}$. To retain sparse processing, we run reverse Cuthill-McKee \citep{cuthill1969rcm} on the bipartite adjacency $\bigl(\begin{smallmatrix}0&A\\A^{\top}&0\end{smallmatrix}\bigr)$ at $O(\nnz(A))$. The resulting order induces both permutations and brings variables sharing constraints closer along the column axis. The ablation design includes a one-dimensional canvas, motivated by the dimension-dependent bound in \Cref{cor:dimension}. An index-order control isolates the contribution of spatial layout.

\subsection{The canvas}
\label{sec:canvas}

At round $t$ the state is rasterised into $\canvas^{(t)}\in\reals^{H\times W\times K}$ with $H=W=128$ and $K=12$. Cell $(p,q)$ owns the variables $\Vpq=\{i:\lay(i)\in\cell(p,q)\}$. Each channel aggregates a numerical quantity over $\Vpq$, covering instance geometry, current state, dual information or search history (\Cref{tab:channels}). The network receives the aggregated channel values as FP32 numbers. Several draw on signals used by classical heuristics: incumbent--relaxation discrepancy in RINS \citep{danna2005rins}, reduced cost in variable fixing, and pseudocost in branching. The policy learns how to combine these signals when selecting a region.

The network's tensor is one part of the search state. The rasteriser maintains per-cell aggregates, and a host-side index maps cells back to variables in the original problem. These structures connect the fixed neural representation to variable-level repair: the policy reuses the same input shape across scales, while the index retains the instance-specific information needed for decoding (\Cref{app:memory}).

For a fixed layout, the twelve channels have three update schedules: four static channels are computed once, four state-dependent channels use incremental updates at $O(|S|+HW)$ for grid and variable state, and four dual channels are refreshed every $\Rlp$ rounds. Selection time includes the amortised refresh cost $\tlp/\Rlp$ and the constraint-update cost specified in \Cref{sec:incremental}. Routing and constraint-programming sub-solvers supply no LP relaxation, so their dual channels use the neutral values in \Cref{app:method} and incur no LP refresh cost. Mixed integer programming includes that additional cost.

Across three decades in $n$, a fixed canvas receives occupancy counts that also span three decades. We use within-instance normalisation and GroupNorm \citep{wu2018groupnorm} throughout to control this input variation. The reported resolution comparisons use normalised inputs and examine the scale dependence identified by \citet{gimf2025}. The ablation design also specifies a separate normalisation study.

Count channels enter as $\log(1+v)$ rescaled by the within-instance $99$th percentile of the same transform, which handles both the three orders of magnitude of variation in $n$ and the within-canvas non-uniformity of clustered instances. Mean-valued channels enter as within-instance robust $z$-scores, using the median and the median absolute deviation so that a few extreme cells do not set the scale. History channels are normalised by the current round index and by the largest improvement seen so far, so that both are bounded and comparable across a run. All statistics are computed per instance.

\subsection{Policy and two-stage decoding}
\label{sec:policy}

The policy $\pol:\reals^{H\times W\times K}\to[0,1]^{H\times W}$ is a U-Net \citep{ronneberger2015unet} with four downsampling stages and one to five million parameters. We train it for the twelve semantic channels and target latency below $50$\,ms on one CPU core to support CPU-only deployment.

Decoding first selects cells and then variables. $\Sample(\heat)$ perturbs the log-heatmap with Gumbel noise at scale $\varepsilon$ and visits cells in decreasing perturbed order until they hold $\beta k$ variables. At unit noise scale, this samples without replacement proportional to $\heat$ \citep{kool2019stochasticbeam}; the scale controls sampling concentration. Morphological closing reduces boundary fragmentation, motivated by \Cref{cor:blocks}. If the positively responding cells are exhausted first, the round uses the available cells and a smaller budget.

The variable stage applies the budget to the variables associated with the sampled cells. The default rule consumes whole cells in decreasing cell score and truncates the final cell by per-variable score. When a relaxation is available, the score is the absolute difference between incumbent and relaxation values; routing and scheduling use family-specific quantities. Cell scores have dimension $HW$ for every $n$, while the variable stage provides finer selection within cells. \Cref{app:truncation} describes the default rule and the score-only alternative.

The coupling bounds concern the full region preimage $\SR$. Budgeted decoding returns a subset, and the full-preimage bound does not automatically apply after truncation. \Cref{app:truncation} describes the proposed comparison of truncation rules. At large $n$, when a cell can contain thousands of variables, the optional refinement in \Cref{app:foveation} re-rasterises the selected region and reapplies $\pol$.

\subsection{Incremental rasterisation and training}
\label{sec:incremental}

\begin{algorithm}[t]
\caption{FOVEA search with a fixed variable budget}
\label{alg:fovea}
\small
\begin{algorithmic}[1]
\Require Instance $(\feas,c)$, feasible incumbent $\inc$, trained policy $\pol$, budget $k$, wall-clock deadline
\State Build the family layout, variable index and initial canvas $\canvas$.
\While{time remains}
  \State If $\inc$ changed, update any incumbent-dependent layout; refresh affected canvas state.
  \State Refresh available relaxation channels when due (\Cref{sec:canvas}).
  \State $\heat\gets\pol(\canvas)$; $\Psel\gets\Sample(\heat)$, including morphological processing.
  \State $S\gets\operatorname{BudgetDecode}(\Psel,k)$ using the rule in \Cref{sec:policy}.
  \State If time remains, solve $\sub{S}$ for at most $\min\{\trep,\text{time left}\}$.
  \State Accept a returned feasible solution if it improves $c^{\top}\inc$.
  \State Record the round's outcome and update the search history.
\EndWhile
\State \Return $\inc$
\end{algorithmic}
\end{algorithm}

Between rounds only variables in $S^{(t)}$ change value, so updates visit the cells meeting $\lay(S^{(t)})$. Rebuilding the grid costs $O(HW)$, and redistributing placement mass costs $O(k\max_{i}|\ent(i)|)$. The tightness channel additionally revisits incident constraints at cost $O(\sum_{j\in E(S^{(t)})}s_{j})$. With bounded constraint degrees and support sizes, these updates are independent of $n$; dense rows increase their cost beyond the grid-and-variable term $O(HW+k)$. The first rasterisation costs $O(n)$ and takes seconds at $n=10^{6}$. We report this setup separately and measure incremental update time across instance sizes.

Training imitates a local branching oracle \citep{fischetti2003localbranching,sonnerat2021neurallns}, uses hindsight relabelling from diverse behaviour policies, and optionally applies REINFORCE \citep{williams1992reinforce} to improvement per unit time (\Cref{app:method}). Positive labels cover the variables that a sub-solve actually changes, focusing supervision on effective moves. All families share one $\pol$ without a family embedding, so held-out families use the same trained parameters and input protocol.

\subsection{Analysis}
\label{sec:analysis}

The analysis relates region boundaries to constraint coupling, identifies neighborhoods with zero possible improvement, and bounds coupling advantage on expander instances. Improvement in the objective also depends on the repair problem. The geometric results use locality \Aloc{}, density \Aden{} and region regularity \Areg{}; \Cref{app:proofs} gives their definitions and the assumptions for each result. The tag \conditional{} identifies results that use additional distributional, geometric or cost-model assumptions. \Cref{tab:predictions} distinguishes associated empirical criteria from the illustrative rank panel.

Coupling depends on the boundary of the selected region. Write $\SR=\lay^{-1}(\region)$ for its variable preimage. At scaling $\delta=\cdelta\sqrt{s/n}$ and $|\region|=k/n$, under the locality, density, regularity and budget conditions of \Cref{prop:iso}, the coupling bound is $O(\kappa m\sqrt{sk\Bpar}/n)+\eta m$. Its boundary term grows with the square root of the released mass; $\eta m$ accounts for constraints that are nonlocal under the chosen layout. Combining this bound with the random-selection lower bound in \Cref{lem:random} gives the geometric factor $\kappa^{-1}\min\{\sqrt{sk},\,n/\sqrt{sk}\}$. Under the additional support and defect assumptions of \Cref{cor:rhostar}, this lower bound on coupling advantage has peak $\Omega(\sqrt{n}/\kappa)$ at $\rhostar=1/s$. For the set cover family with $s\approx10$, this places the geometric peak near $\rhostar\approx0.1$. The random baseline uses a truncated support mean, and the bound applies to $\SR$ before decoder truncation (\Cref{app:proofs}).

\begin{proposition}[Degree-of-freedom certificate]
\label{prop:rank}
Let $E$ be the equality rows and $E(S)$ those meeting $S$. If $\rank(A_{E(S),S})=|S|=k$, then $\inc$ is the unique feasible point of $\sub{S}$ and the round cannot improve. Rank deficiency, $\rank(A_{E(S),S})<k$, is therefore necessary for improvement. Since $\rank(A_{E(S),S})\le\cutE{S}+\inE{S}$, fewer than $k$ incident equality rows guarantee rank deficiency: $\cutE{S}+\inE{S}<k$ implies $\rank(A_{E(S),S})<k$.
\end{proposition}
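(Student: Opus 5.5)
The argument is elementary linear algebra applied to the equality rows of the restricted subproblem, so I would prove the three claims in the order stated.

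\emph{Uniqueness.} Let $x$ be any feasible point of $\sub{S}$ and set $y=x-\inc$. By the fixing constraints in \eqref{eq:sub}, $y_{i}=0$ for every $i\notin S$. The incumbent $\inc$ is feasible for $\feas$, so both $x$ and $\inc$ satisfy $A_{j}x=b_{j}$ for every $j\in E$. Subtracting gives $A_{j}y=0$ for every $j\in E$.

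Because $y$ vanishes outside $S$, this reduces to $A_{j,S}\,y_{S}=0$. Rows $j\in E\setminus E(S)$ have no support in $S$, so they impose nothing and can be dropped. The remaining system is $A_{E(S),S}\,y_{S}=0$, a homogeneous system in $k$ unknowns. Full column rank $\rank(A_{E(S),S})=k$ means the kernel is trivial, so $y_{S}=0$ and $x=\inc$.

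Hence $\inc$ is the unique feasible point, and $\gain(S)=c^{\top}\inc-c^{\top}\inc=0$. The relaxed inequality rows, integrality and bounds only shrink the feasible set further, so they never break uniqueness. The minimum is therefore attained and equals $c^{\top}\inc$, and the round cannot improve. The necessity statement is the contrapositive: $\gain(S)>0$ requires a feasible $x\neq\inc$, which forces a nontrivial kernel and hence $\rank(A_{E(S),S})<k$.

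\emph{Counting bound.} The rank of a matrix is at most its number of rows, so $\rank(A_{E(S),S})\le|E(S)|$. The only point needing care is how $\cutE{S}$ and $\inE{S}$ are defined. I take them to be the equality rows meeting $S$ that straddle $S$ and its complement, and those whose support lies entirely inside $S$, respectively. Every row of $E(S)$ meets $S$, so it is either contained in $S$ or straddles the boundary. This partition gives $|E(S)|=\cutE{S}+\inE{S}$.

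Therefore $\cutE{S}+\inE{S}<k$ gives $\rank(A_{E(S),S})<k$, i.e.\ rank deficiency. The main obstacle is only confirming these definitions of $\cutE{\cdot}$ and $\inE{\cdot}$ (presumably given in \Cref{app:proofs}); if instead they overcount, the inequality $\rank\le|E(S)|\le\cutE{S}+\inE{S}$ still holds.
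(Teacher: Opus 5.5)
Your proof is correct and follows the paper's own route. The paper writes the equality rows meeting $S$ as the affine system $A_{E(S),S}x_{S}=b_{E(S)}-A_{E(S),\bar S}\inc_{\bar S}$ with solution $\inc_{S}$; your homogeneous difference system $A_{E(S),S}y_{S}=0$ is the same uniqueness argument, and your partition of $E(S)$ into straddling and contained rows, with the reading of $\cutE{S}$ and $\inE{S}$ you assumed, is exactly how the paper obtains $|E(S)|=\cutE{S}+\inE{S}$.
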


A row count of at least $k$ can coexist with deficient rank and possible improvement; the certificate itself uses $\rank(A_{E(S),S})=k$. At budgets $k$ in the hundreds, the rank test takes milliseconds. The sufficient condition also motivates an online budget rule: enlarge $k$ until the row count falls below it. In the parameter setting of \Cref{app:rank-substitution}, this occurs above a budget threshold for spatial regions, while random sets fail the condition over the analysed range. The certificate applies to equality rows, including routing degree constraints, packing assignment rows and many MIPLIB instances. The four synthetic families contain only inequalities; there the budget rule uses the tight-inequality heuristic of \Cref{rem:tight}.

\paragraph{Coupling limits on expanders.}
For a left $d$-regular lossless-expander incidence graph satisfying the uniform-support and neighborhood-size conditions of \Cref{prop:expander}, every selection cuts at least $(1-2\epsilon)$ times the expected random coupling. For $\epsilon<1/2$, the attainable coupling advantage is at most $(1-2\epsilon)^{-1}$, uniformly over admissible $S$ on each instance. Even with $s=O(1)$ and $d=O(1)$, spatial layouts must satisfy both this lower bound and the geometric upper bound wherever their hypotheses overlap. These bounds constrain the achievable locality and density parameters; their precise compatibility is discussed in \Cref{app:proofs}. Explicit expander constructions \citep{capalbo2002expanders} provide the test family. We measure its coupling advantage and primal integral separately. Five further results in \Cref{app:proofs} guide the design and cost accounting: boundary fragmentation motivates the perimeter loss and sampler post-processing (\Cref{cor:blocks}); dimension-dependent isoperimetry motivates the one-dimensional canvas comparison (\Cref{cor:dimension}); the resolution bound predicts saturation under regularity assumptions, with a fitted constant (\Cref{cor:gstar}); and the critical-size and memory analyses specify the scaling tests (\Cref{prop:critical,app:memory,sec:experiments}).

\section{Experiments}
\label{sec:experiments}

We measure search quality by the primal integral, charging both selection and repair time. Routing and mixed integer programming provide training data, while scheduling and bin packing test zero-shot transfer of the shared weights. A fifth family, set cover with a lossless-expander incidence graph, tests the coupling limit in \Cref{prop:expander}. The main experiments use a $200$\,s single-threaded horizon and five seeds. A region-growing control shares the canvas, layout, truncation rule, budget schedule, sub-solver and seeds, isolating the contribution of the network. \Cref{app:experiments} gives the splits, seventeen baselines, criteria fixed before the runs, statistical tests and separate MIPLIB protocol. It also describes a thirty-five-study ablation design, with results for five studies in \Cref{tab:ablation-results}.

\begin{table}[t]
\caption{Synthetic mixed integer programs: primal integral in seconds over a $200$\,s compute-charged horizon. Lower is better; best values are bold and FOVEA is shaded. The last row gives FOVEA's percentage improvement over the best runnable baseline. \textsc{oom} denotes an encoder exceeding device memory; a dash denotes a sub-solve that returned no result within the horizon. \Cref{tab:milp} separately lists the published results of \citet{huang2023cllns}, obtained with a $60$\,min protocol at smaller sizes. The other two families appear in \Cref{tab:milp-full}.}
\label{tab:milp-main}
\centering
\footnotesize
% Generated table body. Edit the results data, not this file.
\begin{tabular}{@{}lccc@{\hspace{1.15em}}ccc@{}}
\toprule
Method & \multicolumn{3}{c}{SC} & \multicolumn{3}{c}{CA} \\
\cmidrule(lr){2-4}\cmidrule(lr){5-7}
& $10^{4}$ & $10^{5}$ & $10^{6}$ & $10^{4}$ & $10^{5}$ & $10^{6}$ \\
\midrule
SCIP, default & 58.4 & 128.6 & 171.2 & 74.9 & 149.3 & 184.1 \\
SCIP, aggressive & 46.2 & 112.4 & 158.9 & 63.1 & 134.7 & 176.3 \\
Random LNS & 34.7 & 71.3 & 96.4 & 68.2 & 103.8 & 138.9 \\
RINS & 31.9 & 68.9 & 94.8 & 59.4 & 98.2 & 132.4 \\
Local branching & 62.8 & --- & --- & 88.3 & --- & --- \\
LB-RELAX & 28.6 & 64.2 & 89.1 & 52.7 & 92.4 & 127.8 \\
IL-LNS & 25.3 & \textsc{oom} & \textsc{oom} & 47.6 & \textsc{oom} & \textsc{oom} \\
CL-LNS & \textbf{21.4} & \textsc{oom} & \textsc{oom} & \textbf{41.2} & \textsc{oom} & \textsc{oom} \\
\midrule
Region growing & 24.1 & 55.7 & 78.5 & 49.8 & 84.6 & 115.2 \\
\rowcolor{gray!12}FOVEA & 22.0 & \textbf{48.3} & \textbf{66.2} & 43.5 & \textbf{74.1} & \textbf{99.6} \\
\addlinespace[1pt]
\emph{gain over best runnable} & -2.8 & \textcolor{fovblue}{\textbf{+13.3}} & \textcolor{fovblue}{\textbf{+15.7}} & -5.6 & \textcolor{fovblue}{\textbf{+12.4}} & \textcolor{fovblue}{\textbf{+13.5}} \\
\bottomrule
\end{tabular}

\end{table}

\paragraph{Scaling and computational cost.}
At $10^{4}$ variables, FOVEA has a $2.8\%$ higher primal integral than CL-LNS on set cover and a $5.6\%$ higher integral on auctions (\Cref{tab:milp-main}). At this scale, the $128^{2}$-cell representation trades direct variable scoring for aggregated state information. At $10^{5}$ and $10^{6}$ variables, both graph encoders exceed device memory. FOVEA lowers the primal integral relative to the best runnable baselines by $13.3\%$ and $15.7\%$ on set cover, and $12.4\%$ and $13.5\%$ on auctions. Each large-scale comparison has $p<0.05$ and concerns methods that fit the stated hardware budget.

\begin{figure}[t]
\centering
\includegraphics{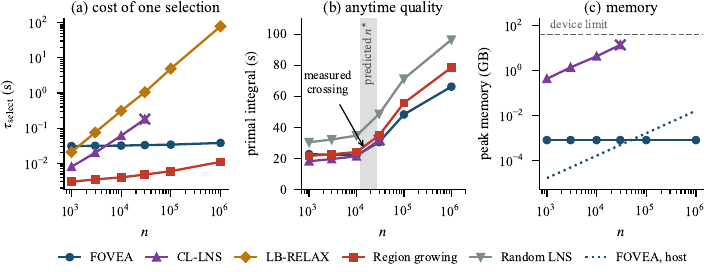}
\caption{Set cover across three decades of instance size, with budget $k=500$ fixed independently of $n$. Half-decade samples below $10^{5}$ bracket the crossing in (b). A cross marks the last size completed within the $40$\,GB device budget. The band is the $\nstar$ interval from \Cref{prop:critical} using the constants in \Cref{tab:timing}. The dotted line in (c) gives FOVEA's $O(n)$ host-side state.}
\label{fig:scaling}
\end{figure}

Selection time in \Cref{fig:scaling}(a) varies by at most $23\%$ across three decades, rising from $31$ to $38$\,ms as incremental updates touch more constraints per released variable. The graph encoder grows approximately in proportion to instance size until its memory limit. Region growing has lower selection cost throughout, by a factor of ten at $n=10^{3}$; the learned policy provides the quality improvement examined below. Below $n\approx5\times10^{3}$, FOVEA's selection step also costs more than the graph encoder's. Overall performance reflects both this cost and neighborhood quality, including the CL-LNS advantage at $10^{4}$ in \Cref{tab:milp-main}. The constants in \Cref{tab:timing} place $\nstar$ between $10^{4.1}$ and $10^{4.4}$, using the seed spread of quality ratio $r=1.03$ and a $15\%$ tolerance on the fitted encoder slope. The curves cross at $10^{4.19}$. The estimate of $r$ comes from compute-matched runs, independently of the primal-integral comparison. Because $\nstar$ equates bounds with different tightness, we interpret the estimate at factor-of-two precision (\Cref{rem:nstar}).

\begin{table}[t]
\caption{Routing. CVRPLIB X and XL report primal integral and gap to the best known value. Large sets, from AGS through Lazio, use the objective ratio to FILO2 at equal wall clock, consistently across the column because FILO2 supplies the reference solutions at the largest sizes. HGS, LKH-3 and FILO2 are re-run under the same $200$\,s limit. This truncates the first two relative to their published protocols in \Cref{tab:cvrp-x}; FILO2's short configuration completes its iteration budget and reproduces its published value. For solvers whose incumbent traces were not recorded, we report only endpoint gaps or objective ratios. A dash in the ratio column denotes no returned result on the million-customer instance within the deadline. Lower is better.}
\label{tab:cvrp-main}
\centering
\footnotesize
% Generated table body. Edit the results data, not this file.
\begin{tabular}{@{}lcc@{\hspace{1.15em}}cc@{\hspace{1.15em}}c@{}}
\toprule
Method & \multicolumn{2}{c}{X, $10^{2}$--$10^{3}$} & \multicolumn{2}{c}{XL, $10^{3}$--$10^{4}$} & \multicolumn{1}{c}{AGS through Lazio, to $10^{6}$} \\
\cmidrule(lr){2-3}\cmidrule(lr){4-5}\cmidrule(lr){6-6}
& $\PI$ & gap & $\PI$ & gap & ratio \\
\midrule
Random LNS & 18.7 & 2.41 & 31.4 & 4.82 & 1.212 \\
ALNS & 13.2 & 1.63 & 24.6 & 3.71 & 1.157 \\
L2D, $k{=}5$ & 10.4 & 1.18 & 23.8 & 3.16 & --- \\
L2D, $k{=}10$ & 9.6 & 1.04 & 21.3 & 2.88 & --- \\
LKH-3 & --- & 0.51 & --- & 1.94 & --- \\
HGS & --- & \textbf{0.19} & --- & 0.71 & --- \\
FILO2, short & --- & 0.34 & --- & \textbf{0.62} & 1.000 \\
\midrule
Region growing & 9.8 & 1.12 & 17.9 & 2.35 & 1.094 \\
\rowcolor{gray!12}FOVEA & \textbf{8.9} & 0.88 & \textbf{15.2} & 2.02 & \textbf{1.061} \\
\bottomrule
\end{tabular}

\end{table}

\begin{figure}[t]
\centering
\begin{minipage}[b]{0.50\textwidth}
\centering
\includegraphics{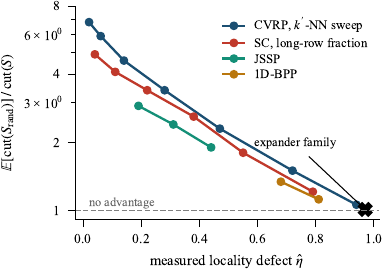}
\end{minipage}\hfill
\begin{minipage}[b]{0.46\textwidth}
\centering
\includegraphics{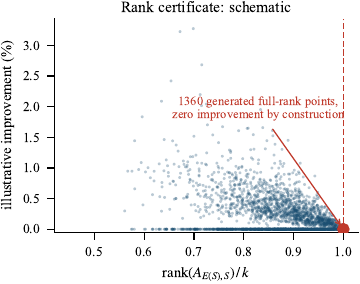}
\end{minipage}
\caption{Left: locality and coupling from the within-family sweeps associated with \Cref{prop:iso}. Routing varies the candidate-edge cutoff $k'$; set cover varies the fraction of long rows. These sweeps hold the problem family fixed. The cross marks the expander result, whose coupling advantage is limited by \Cref{prop:expander}. Right: a schematic of \Cref{prop:rank} containing $4000$ generated points. Of these, $1360$ satisfy $\rank(A_{E(S),S})=k$ and are assigned zero improvement by construction. The panel illustrates the certificate and provides no independent empirical test.}
\label{fig:rank}
\end{figure}

\paragraph{Routing: search trajectories and final quality.}
FOVEA has the lowest reported primal integral among the neighborhood search methods on CVRPLIB X and XL, lowering it relative to region growing by $9.2\%$ and $15.1\%$. Specialised solvers achieve better final objectives: HGS reaches a $0.19\%$ gap on CVRPLIB X compared with FOVEA's $0.88\%$, and FOVEA's objective averages $6.1\%$ above FILO2's across the large sets at the same deadline (\Cref{tab:cvrp-main}). With weights trained at $N\le2000$, FOVEA also has a lower aggregate objective ratio to FILO2 than region growing on the large sets extending to $10^{6}$ customers. Only endpoint results are reported at these sizes, and the million-customer tier contains a single instance.

\begin{table}[t]
\caption{Left: a factorial comparison of destroy rule and layout, varying one factor between adjacent cells. Right: zero-shot transfer of the shared weights and the expander family used to test \Cref{prop:expander}. Values are primal integrals, except for the final column, which gives coupling advantage over random selection. The three rules have similar coupling on expanders ($p=0.61$).}
\label{tab:layout}
\centering
\footnotesize
\begin{minipage}[t]{0.50\textwidth}\centering% Generated table body. Edit the results data, not this file.
\begin{tabular}{@{}lcc@{\hspace{1.15em}}cc@{}}
\toprule
Method & \multicolumn{2}{c}{SC, $10^{4}$} & \multicolumn{2}{c}{CVRP X} \\
\cmidrule(lr){2-3}\cmidrule(lr){4-5}
& spatial & index & spatial & index \\
\midrule
Region grow. & 24.1 & 33.8 & 9.8 & 15.6 \\
\rowcolor{gray!12}FOVEA & \textbf{22.0} & \textbf{32.9} & \textbf{8.9} & \textbf{15.1} \\
\bottomrule
\end{tabular}
\end{minipage}\hfill
\begin{minipage}[t]{0.48\textwidth}\centering% Generated table body. Edit the results data, not this file.
\begin{tabular}{@{}lcc@{\hspace{1.15em}}cc@{}}
\toprule
Method & \multicolumn{2}{c}{Zero-shot $\PI$} & \multicolumn{2}{c}{Expander SC} \\
\cmidrule(lr){2-3}\cmidrule(lr){4-5}
& JSSP & BPP & $\PI$ & gain \\
\midrule
Random LNS & 42.6 & \textbf{27.3} & 51.2 & 1.00 \\
\midrule
Region grow. & 40.1 & 27.9 & 50.8 & 1.02 \\
\rowcolor{gray!12}FOVEA & \textbf{39.4} & 27.8 & \textbf{50.6} & 1.01 \\
\bottomrule
\end{tabular}
\end{minipage}
\end{table}

\paragraph{Where the advantage comes from.}
Spatial layout accounts for most of the gain in \Cref{tab:layout}: relative to index order, it reduces the fixed rule's primal integral by $28.7\%$ on set cover and $37.2\%$ on routing. Learning on that layout lowers the integral by a further $8.7\%$ and $9.2\%$ at $p<0.01$. Under index order, the corresponding reductions are $2.7\%$ and $3.2\%$, both statistically insignificant. This contrast indicates that useful spatial adjacency supports the learned policy.

On scheduling, the shared weights trained on routing and mixed integer programming lower the primal integral relative to random selection by $7.5\%$ at $p=0.012$. End-to-end CP-SAT retains better performance. On bin packing, FOVEA's primal integral is $1.8\%$ higher than random selection, with $p=0.44$. Its coupling advantage is $1.1$ to $1.3$, showing that the relation between coupling and trajectory quality depends on the repair problem. On expanders, the coupling advantage is $1.01$, compared with $3.4$ on ordinary set cover at the same density. \Cref{prop:expander} bounds this advantage uniformly over admissible selections on each instance.

\begin{table}[t]
\caption{Five ablation studies on set cover (SC, $n=10^{5}$) and CVRP XL. Values are primal integrals (lower is better). Deployed settings are shaded; best values within each study are bold.}
\label{tab:ablation-results}
\centering
\small
% Generated table body. Edit the results data, not this file.
\begin{tabular}{@{}llcc@{}}
\toprule
Study & Arm & SC, $10^{5}$ & CVRP XL \\
\cmidrule(lr){3-3}\cmidrule(lr){4-4}
& & $\PI$ & $\PI$ \\
\midrule
A2, Canvas resolution & $g=32$ & 61.4 & 19.8 \\
 & $g=64$ & 52.7 & 16.4 \\
\rowcolor{gray!12} & $g=128$ & 48.3 & 15.2 \\
 & $g=256$ & \textbf{47.9} & \textbf{15.0} \\
 & $g=512$ & 48.8 & 15.4 \\
\addlinespace[2pt]
\rowcolor{gray!12}A7, Fixed against scaled resolution & fixed $g=128$ & \textbf{48.3} & \textbf{15.2} \\
 & $g\propto\sqrt{n}$ & 50.1 & 15.6 \\
\addlinespace[2pt]
B3, Action space & free-form mask & \textbf{47.8} & \textbf{15.1} \\
\rowcolor{gray!12} & connected & 48.3 & 15.2 \\
 & one rectangle & 56.2 & 17.9 \\
\addlinespace[2pt]
\rowcolor{gray!12}B7, Budget schedule & fixed $k$ & 48.3 & 15.2 \\
 & adaptive growth & 46.9 & 14.8 \\
 & rank rule & \textbf{45.2} & \textbf{14.6} \\
\addlinespace[2pt]
\rowcolor{gray!12}D5, Shared against per-family weights & one shared set & 48.3 & 15.2 \\
 & per-family & \textbf{46.4} & \textbf{14.3} \\
\bottomrule
\end{tabular}

\end{table}

\paragraph{Resolution.}
The resolution ablation shows little change in primal integral from $g=128$ to $g=256$ and an increase at $512$ (\Cref{tab:ablation-results}). The larger output dimension under a fixed supervision budget is one possible explanation for this pattern. On set cover at $n=10^{5}$, the $\sqrt{n}$ resolution schedule of \citet{gimf2025} with normalised inputs has a $3.7\%$ higher primal integral than fixed $g=128$.

\paragraph{Action space, budget and shared weights.}
Free-form masks have slightly lower primal integrals than connected masks on both datasets. On set cover at $n=10^{5}$, a single rectangle has a $16.4\%$ higher integral than a connected mask. In this comparison, allowing irregular boundaries makes a larger difference than relaxing connectivity. Shared weights have a $4.1\%$ higher integral than per-family weights, quantifying the quality cost of using one parameter set across families. Those shared parameters also serve the held-out scheduling and packing families, connecting this quality comparison to the transfer results in \Cref{tab:layout}. The main scaling comparisons use a fixed budget; in the separate set-cover ablation, the online rule motivated by \Cref{prop:rank}, implemented with the tight-inequality heuristic for pure covering programs, lowers the integral by $6.4\%$ relative to fixed $k$. CVRP XL shows the same directions of change.

The rank certificate is established analytically. \Cref{fig:rank} illustrates it with generated points whose full-rank cases have zero improvement by construction; no empirical evaluation using per-round rank and improvement logs is reported.

\section{Conclusion}
\label{sec:conclusion}

FOVEA represents destroy selection as dense segmentation on a fixed canvas, connecting a shared policy to variable-level repair across problem families. On set cover and routing, spatial layout supplies most of the observed gain, and learning improves on the same-layout control. Comparisons that include selection and repair time show a scale-dependent crossover with the evaluated graph encoders. The expander analysis and experiments clarify how constraint locality affects this design. A fixed spatial interface therefore offers a practical way to organise neighborhood selection, with benefits that depend on the layout and repair problem.

\clearpage
\bibliography{fovea}
\bibliographystyle{iclr2026_conference}

\appendix
\section{Proofs}
\label{app:proofs}

This appendix gives the full statements and proofs of the results in \Cref{sec:analysis}. Throughout, $\per(\region)=\mathcal{H}^{1}(\partial\region)$, $\SR=\lay^{-1}(\region)$, and $\Bpar$ counts boundary components: a connected region with one hole has two boundary components and one connected component.

The analysis uses three standing hypotheses, each measurable on an instance before a solver is run. \Aloc{} \emph{$(\delta,\eta)$-locality}: at most $\eta m$ constraints satisfy $\diam(\lay(\supp A_{j}))>\delta$. \Aden{} \emph{$\kappa$-bounded density}: every ball of radius $r\ge\delta$ contains at most $\kappa mr^{2}$ constraint barycentres. \Areg{} \emph{admissible regions}: $\regfam$ consists of the unions of cells $\region$ whose boundary has $\Bpar(\region)\le B$ components and satisfies $\per(\region)\le C\sqrt{\Bpar(\region)\,|\region|}$, with the perimeter condition imposed on each region.

The distributional form of \Aloc{} accommodates the small number of global constraints present in every family studied here. The geometric upper bounds concern $\SR$, the full preimage of a region. The deployed decoder returns $S\subseteq\SR$ after truncation. Since $\cut{\cdot}$ is not monotone under inclusion, these bounds do not transfer to the truncated set; its coupling is evaluated empirically.

\subsection{The isoperimetric upper bound}

The crossing lemma holds for arbitrary regions.

\begin{lemma}[Crossing lemma]
\label{lem:crossing}
Let $\region\subseteq\reals^{2}$ be arbitrary, $p\in\region$ and $q\notin\region$. Then $[p,q]\cap\partial\region\neq\emptyset$.
\end{lemma}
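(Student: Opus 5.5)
The plan is a connectedness argument on the segment, parametrised by $t\in[0,1]$. Write $\gamma(t)=(1-t)p+tq$, so that $\gamma(0)=p\in\region$ and $\gamma(1)=q\notin\region$. We look for a parameter at which $\gamma$ leaves $\region$. Define
\begin{equation*}
t^{\ast}=\sup\{t\in[0,1]:\gamma(t)\in\region\}.
\end{equation*}
This supremum is well defined because the set contains $0$. We then show that $\gamma(t^{\ast})\in\partial\region$, where $\partial\region=\overline{\region}\setminus\intr\region$ is the topological boundary. No regularity of $\region$ is assumed, which matters because the statement is for arbitrary sets.

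The first step is to show that $\gamma(t^{\ast})\in\overline{\region}$. By the definition of the supremum there are $t_{n}\uparrow t^{\ast}$ with $\gamma(t_{n})\in\region$. If $t^{\ast}$ itself lies in the set, the sequence can be taken constant. Since $\gamma$ is continuous, $\gamma(t_{n})\to\gamma(t^{\ast})$, so $\gamma(t^{\ast})$ is a limit of points of $\region$.

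The second step is to show that $\gamma(t^{\ast})\notin\intr\region$. Suppose instead that some ball $B(\gamma(t^{\ast}),r)$ lies in $\region$. Then $t^{\ast}<1$, since $\gamma(1)=q\notin\region$. Points $\gamma(t)$ with $t$ slightly larger than $t^{\ast}$ stay in that ball, because $|\gamma(t)-\gamma(t^{\ast})|=(t-t^{\ast})|q-p|$. Such points therefore lie in $\region$, which contradicts maximality of $t^{\ast}$. The degenerate case $p=q$ cannot occur, since $p\in\region$ and $q\notin\region$.

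The two steps together give $\gamma(t^{\ast})\in\overline{\region}\setminus\intr\region=\partial\region$, and this point lies on $[p,q]$. The only delicate point is the case $t^{\ast}=1$. There $\gamma(1)=q$ is a limit of points of $\region$ while $q$ itself is not in $\region$. Hence $q\in\overline{\region}$ and $q\notin\intr\region$, so $q\in\partial\region$ directly. The argument therefore handles every case without any measurability or perimeter assumption. I do not expect a real obstacle. The main care needed is to use the topological boundary rather than the measure-theoretic one underlying $\per(\region)$, and I would note this convention explicitly.
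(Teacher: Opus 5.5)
Your proof is correct. It reaches the boundary point by a different route from the paper.

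\textbf{The paper's route.} It first disposes of the cases $p\in\partial\region$ or $q\in\partial\region$. Otherwise $p\in\intr\region$ and $q\in\intr(\region^{c})$, and it uses the partition $\reals^{2}=\intr\region\sqcup\partial\region\sqcup\intr(\region^{c})$. A segment avoiding $\partial\region$ would then be covered by two disjoint open sets that each meet it, which contradicts the connectedness of $[p,q]$.

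\textbf{Your route.} You locate a specific point, the parameter $t^{\ast}$, and check directly that $\gamma(t^{\ast})\in\overline{\region}\setminus\intr\region$. In effect you reprove the connectedness of $[0,1]$ from the supremum property. This removes the need for the preliminary case split, since the endpoint case $t^{\ast}=1$ is handled by the same argument. It also avoids the three-way decomposition of the plane, which the paper uses implicitly to place $q$ in $\intr(\region^{c})$.

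\textbf{What each buys.} The paper's version is shorter and applies verbatim to any connected set containing $p$ and $q$. Yours works for any continuous path and gives slightly more information: $\gamma(t)\notin\region$ for every $t>t^{\ast}$. The proof of \Cref{prop:iso}, however, only needs some boundary point on the segment.

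\textbf{A small correction.} Your closing remark contrasts the topological boundary with a measure-theoretic one. The paper defines $\per(\region)=\mathcal{H}^{1}(\partial\region)$ with $\partial\region$ the topological boundary, so no competing notion of boundary is in play here.
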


\begin{proof}
If $p\in\partial\region$ or $q\in\partial\region$ the conclusion follows immediately, so assume $p\in\intr\region$ and $q\in\intr(\region^{c})$. The plane decomposes as the disjoint union $\intr\region\sqcup\partial\region\sqcup\intr(\region^{c})$. If $[p,q]\cap\partial\region=\emptyset$ then the connected set $[p,q]$ is contained in the union of the two disjoint open sets $\intr\region$ and $\intr(\region^{c})$ and meets each of them, contradicting connectedness.
\end{proof}

\begin{proposition}[Isoperimetric upper bound]
\label{prop:iso}
Under \Aloc{}, \Aden{} and \Areg{}, for every $\region\in\regfam$ with $|\region|=a$,
\begin{equation}
\cut{\SR}\;\le\;9\kappa m\bigl(\delta\per(\region)+\Bpar\delta^{2}\bigr)+\eta m
\;\le\;9\kappa m\bigl(C\delta\sqrt{\Bpar a}+\Bpar\delta^{2}\bigr)+\eta m .
\label{eq:prop1}
\end{equation}
If moreover $\delta=\cdelta\sqrt{s/n}$, $a=k/n$ and $k\gtrsim\Bpar s$, then $\cut{\SR}=O(\kappa m\sqrt{sk\Bpar}/n)+\eta m$.
\end{proposition}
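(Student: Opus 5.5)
We split the constraints into the at most $\eta m$ nonlocal rows allowed by \Aloc{}, which contribute the additive $\eta m$ term, and the local rows with $\diam(\lay(\supp A_{j}))\le\delta$. For a local row $j$ that straddles $\SR$, pick $i\in\supp(A_{j})\cap\SR$ and $i'\in\supp(A_{j})\setminus\SR$. Then $p=\lay(i)\in\region$ and $q=\lay(i')\notin\region$, and $|p-q|\le\delta$. By \Cref{lem:crossing}, the segment $[p,q]$ meets $\partial\region$. Since $[p,q]$ lies within distance $\delta$ of every support point, the barycentre of row $j$ (a convex combination of the support points) lies within distance $\delta$ of $p$, hence within $2\delta$ of a point of $\partial\region$. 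So every straddling local row has its barycentre in the tubular neighbourhood $N_{2\delta}(\partial\region)$, and it suffices to count barycentres there.

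Next I cover the tube by balls. On each boundary component $\Gamma$ (a closed rectifiable curve of length $\ell_\Gamma$), place points $z_1,\dots,z_{M}$ at arclength spacing $\delta$, with $M\le\lceil \ell_\Gamma/\delta\rceil\le \ell_\Gamma/\delta+1$. Every point of $\Gamma$ is then within $\delta/2$ of some $z_t$, so $N_{2\delta}(\Gamma)\subseteq\bigcup_t B(z_t,3\delta)$. Applying \Aden{} with $r=3\delta\ge\delta$, each ball holds at most $9\kappa m\delta^{2}$ barycentres. Summing over components gives
\[
\#\bigl\{\text{local straddling rows}\bigr\}\le 9\kappa m\delta^{2}\sum_{\Gamma}\bigl(\ell_\Gamma/\delta+1\bigr)=9\kappa m\bigl(\delta\per(\region)+\Bpar\delta^{2}\bigr).
\]
This is the first inequality in \eqref{eq:prop1}. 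The second follows by substituting the perimeter condition $\per(\region)\le C\sqrt{\Bpar a}$ from \Areg{}.

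For the scaling statement, substitute $\delta=\cdelta\sqrt{s/n}$ and $a=k/n$. The perimeter term becomes $9\kappa m C\cdelta\sqrt{s\Bpar k}/n$. The corner term becomes $9\kappa m\Bpar\cdelta^{2}s/n$. The ratio of the corner term to the perimeter term is $\cdelta\sqrt{\Bpar s/k}/C$, which is $O(1)$ under $k\gtrsim\Bpar s$. Hence the total is $O(\kappa m\sqrt{sk\Bpar}/n)+\eta m$.

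The main obstacle is the covering step, in two respects. First, the boundary of a union of cells may be only rectifiable, not smooth, so the cover should rest on arclength parametrisation of each component rather than on any curvature argument. Second, the constant $9$ depends on the barycentre reaching distance $2\delta$ from the boundary, combined with the $\delta/2$ spacing, which yields radius $3\delta$. To keep the constant at $9$, I will use exactly this spacing and verify that the barycentre lies in the convex hull of the support points. If the component count $\Bpar$ already absorbs the $+1$ per component, no further case analysis is needed.
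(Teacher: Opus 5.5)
Your proposal is correct and follows the paper's proof essentially step for step. Both proofs charge the nonlocal rows to $\eta m$, use \Cref{lem:crossing} to place each local straddling barycentre in the $2\delta$-tube of $\partial\region$, cover that tube by at most $\per(\region)/\delta+\Bpar$ balls of radius $3\delta$ with \Aden{} applied at $r=3\delta$, and use the same term-ratio comparison for the specialisation. One small note: with your $\delta/2$ spacing, the radius actually needed is $2\delta+\delta/2=5\delta/2$, so taking $3\delta$ is a harmless choice with some slack rather than forced.
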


\begin{proof}
Write $J_{\mathrm{near}}=\{j:\diam\lay(\supp A_{j})\le\delta\}$. By \Aloc{}, $|J_{\mathrm{near}}^{c}|\le\eta m$, and counting all of these constraints as cut gives their contribution to the upper bound.

Let $j\in J_{\mathrm{near}}$ be cut by $\SR$, so there are $i,i'\in\supp A_{j}$ with $\lay(i)\in\region$ and $\lay(i')\notin\region$. By \Cref{lem:crossing} the segment $[\lay(i),\lay(i')]$ contains a point $z\in\partial\region$. Since $j\in J_{\mathrm{near}}$ we have $|\lay(i)-\lay(i')|\le\delta$, and $z$ lies on that segment, so $|z-\lay(i)|\le\delta$. Every point of $\lay(\supp A_{j})$ lies within $\delta$ of $\lay(i)$, hence within $2\delta$ of $z$, so $\lay(\supp A_{j})\subseteq B(z,2\delta)$ and in particular the barycentre satisfies $z_{j}\in(\partial\region)^{2\delta}$.

It remains to count barycentres in the tube. Because $\region$ is a union of cells, $\partial\region$ is a disjoint union of $\Bpar$ closed rectilinear polylines; let the $c$-th have length $\ell_{c}$, so $\sum_{c}\ell_{c}=\per(\region)$. Marking points along the $c$-th polyline at arc-length spacing $\delta$ produces at most $\ell_{c}/\delta+1$ centres whose closed $\delta$-balls cover it, and enlarging each radius to $3\delta$ covers $(\partial\region)^{2\delta}$. Hence $(\partial\region)^{2\delta}$ is covered by at most $\per(\region)/\delta+\Bpar$ balls of radius $3\delta$. Applying \Aden{} with $r=3\delta\ge\delta$, each such ball contains at most $\kappa m(3\delta)^{2}=9\kappa m\delta^{2}$ barycentres, so
\[
\#\{j\in J_{\mathrm{near}}: j\text{ is cut}\}\;\le\;9\kappa m\delta^{2}\Bigl(\frac{\per(\region)}{\delta}+\Bpar\Bigr)=9\kappa m\bigl(\delta\per(\region)+\Bpar\delta^{2}\bigr).
\]
Adding $\eta m$ gives the first inequality, and substituting $\per(\region)\le C\sqrt{\Bpar a}$ from \Areg{} gives the second.

For the specialisation, put $\delta=\cdelta\sqrt{s/n}$ and $a=k/n$. Then $\delta\per(\region)\le C\cdelta\sqrt{sk\Bpar}/n$ and $\Bpar\delta^{2}=\Bpar\cdelta^{2}s/n$, and the ratio of the first to the second is $(C/\cdelta)\sqrt{k/(s\Bpar)}$, which is at least one exactly when $k\ge(\cdelta/C)^{2}\Bpar s$. Here $k\gtrsim\Bpar s$ abbreviates this condition, with the same constant $\cdelta$ used in \Cref{cor:blocks}. In that regime the first term dominates and the stated bound follows.
\end{proof}

\begin{remark}[The role of the isoperimetric inequality]
\label{rem:iso-role}
The continuum inequality $\inf_{|\region|=a}\per(\region)=2\sqrt{\pi a}$ establishes that the constraint $\per\le C\sqrt{B|\region|}$ in \Areg{} is satisfiable and that the exponent of $a$ is optimal. The proof above uses the upper bound on perimeter supplied by \Areg{}. On a grid of resolution $g$ a union of cells of area $a$ has $\per\ge4\lceil g\sqrt{a}\rceil/g\ge4\sqrt{a}$, so the attainable constant is $C=4$, realised by the axis-aligned square. The continuum value $2\sqrt{\pi}\approx3.54$ belongs to the larger class of measurable regions. Squares are the grid isoperimetric optimum.
\end{remark}

\begin{remark}[Dependence of locality on the formulation]
\label{rem:eta}
Reformulation changes the parameter $\eta$. On a complete-graph routing formulation, the degree constraint at customer $i$ has as its support every edge incident to $i$, whose image spans the canvas, so $\eta\approx1$ and the global-constraint term dominates \Cref{prop:iso}. On a $k$-nearest-neighbour candidate edge set the same constraint has local support and $\eta$ falls close to zero. Candidate sparsification is therefore a precondition in \Cref{sec:entity}. The routing sweep in \Cref{fig:rank} varies the candidate cutoff, taking $\hat\eta$ from near zero to near one within one family.
\end{remark}

\subsection{The random baseline}

\begin{lemma}[Random baseline]
\label{lem:random}
Let $\Srand$ be uniform among $k$-subsets of $\vars$ and $\rho=k/n$. For every $j$,
\begin{align*}
\Prb[\supp A_{j}\subseteq\Srand]&=\prod\nolimits_{i=0}^{s_{j}-1}\tfrac{k-i}{n-i}\le\rho^{s_{j}},\\
\Prb[\supp A_{j}\cap\Srand=\emptyset]&=\prod\nolimits_{i=0}^{s_{j}-1}\tfrac{n-k-i}{n-i}\le(1-\rho)^{s_{j}},
\end{align*}
so $\Prb[j\text{ cut}]\ge1-\rho^{s_{j}}-(1-\rho)^{s_{j}}$. Let $J_{2}=\{j:s_{j}\ge2\}$ contain the constraints that can be cut, and let $J_{\mathrm{sml}}=\{j\in J_{2}:s_{j}\le1/\rho\}$ contain those for which the linear estimate applies. Define the truncated mean support size by $\seff=\frac{1}{m}\sum_{j\in J_{\mathrm{sml}}}s_{j}$. For $\rho\le1/4$,
\begin{equation}
\frac{m\,\seff\,k}{4n}+0.56\,\bigl|J_{2}\setminus J_{\mathrm{sml}}\bigr|
\;\le\;\Exp[\cut{\Srand}]\;\le\;\min\Bigl\{m,\;\frac{msk}{n}\Bigr\}.
\label{eq:bracket}
\end{equation}
\end{lemma}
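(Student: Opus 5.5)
The plan has three parts: the two exact hypergeometric probabilities, linearity of expectation for the upper bound, and a split of the constraints by support size for the lower bound.

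\emph{Exact probabilities.} Under a uniform $k$-subset, the event $\supp A_{j}\subseteq\Srand$ has probability $\binom{n-s_j}{k-s_j}/\binom{n}{k}$. This telescopes to $\prod_{i<s_j}(k-i)/(n-i)$. Each factor is at most $k/n=\rho$, because $k\le n$ gives $(k-i)/(n-i)\le k/n$. Disjointness has probability $\binom{n-s_j}{k}/\binom{n}{k}=\prod_{i<s_j}(n-k-i)/(n-i)$, and each factor is at most $1-\rho$ by the same monotonicity. Since the two events are disjoint for $s_j\ge1$, the complement gives $\Prb[j\text{ cut}]\ge1-\rho^{s_j}-(1-\rho)^{s_j}$.

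\emph{Upper bound.} I use linearity, $\Exp[\cut{\Srand}]=\sum_j\Prb[j\text{ cut}]$. Each term is at most $1$, which gives the bound $m$. For the other bound, a cut constraint must meet $\Srand$, so a union bound gives $\Prb[j\text{ cut}]\le\Prb[\supp A_j\cap\Srand\neq\emptyset]\le s_j\rho$. Summing and using $\sum_j s_j=ms$ yields $msk/n$.

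\emph{Lower bound.} I split the sum over $J_2$ into $J_{\mathrm{sml}}$ and its complement and drop the constraints with $s_j\le1$, which are never cut.
\begin{itemize}
\item For $j\in J_{\mathrm{sml}}$, write $f(s)=1-\rho^{s}-(1-\rho)^{s}$ and aim for $f(s)\ge s\rho/4$ whenever $2\le s\le1/\rho$ and $\rho\le1/4$.
\begin{itemize}
\item Since $s\rho\le1$, the inequality $(1-\rho)^{s}\le e^{-s\rho}\le1-s\rho+(s\rho)^2/2\le1-s\rho/2$ gives $1-(1-\rho)^s\ge s\rho/2$.
\item Then $\rho^{s}\le\rho^2\le\rho/4\le s\rho/4$, using $\rho\le1/4$ and $s\ge2$.
\item Together, $f(s)\ge s\rho/2-s\rho/4=s\rho/4$.
\end{itemize}
Summing over $J_{\mathrm{sml}}$ gives $\frac{\rho}{4}\sum_{J_{\mathrm{sml}}}s_j=m\seff k/(4n)$.
\item For $j\in J_2\setminus J_{\mathrm{sml}}$, I have $s_j>1/\rho\ge4$, so $s_j\ge5$ and $s_j\rho>1$. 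Hence $(1-\rho)^{s_j}\le e^{-s_j\rho}<e^{-1}\approx0.368$, and $\rho^{s_j}\le4^{-5}\approx0.001$. Thus $f(s_j)\ge1-0.368-0.001\ge0.56$.
\end{itemize}
Adding the two parts gives the left side of \eqref{eq:bracket}.

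\emph{Main obstacle.} The delicate step is the linear estimate on $J_{\mathrm{sml}}$. The constant $1/4$ must absorb both the second-order Taylor term and the $\rho^{s}$ term. I would check the endpoints $s=2$ and $s\rho=1$ explicitly so that the constant holds uniformly over the range.
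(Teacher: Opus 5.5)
Your proposal is correct and follows the paper's proof essentially step for step: the same hypergeometric products, the union bound $\Pr[j\text{ cut}]\le s_j\rho$ for the upper bound, and the split of $J_2$ at $s_j=1/\rho$, using $e^{-u}\le 1-u+u^2/2$ on the short supports and $e^{-1}$ on the long ones. The only differences are in constants, and neither affects the result, so the endpoint checks you propose are unnecessary. You absorb $\rho^{s}$ via $\rho/4\le s\rho/4$, where the paper uses $\rho/4\le u/8$ to reach $3u/8$. You also use integrality, $s_j\ge 5$, to get $\rho^{s_j}\le 4^{-5}$, where the paper uses $\rho^{s_j}\le 1/16$.
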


\begin{proof}
Counting $k$-subsets containing a fixed $s_{j}$-set gives $\Prb[\supp A_{j}\subseteq\Srand]=\binom{n-s_{j}}{k-s_{j}}/\binom{n}{k}=\prod_{i=0}^{s_{j}-1}\frac{k-i}{n-i}$, and each factor is at most $k/n=\rho$ because $k\le n$. Similarly $\Prb[\supp A_{j}\cap\Srand=\emptyset]=\binom{n-s_{j}}{k}/\binom{n}{k}=\prod_{i=0}^{s_{j}-1}\frac{n-k-i}{n-i}$ with each factor at most $1-\rho$. A constraint is cut precisely when neither event occurs, giving the stated bound. Sampling without replacement makes each event \emph{less} likely than under independent sampling, which gives the stated lower bound.

For the upper bound in \eqref{eq:bracket}, a constraint can only be cut if it is met, so by a union bound $\Prb[j\text{ cut}]\le s_{j}\rho$, and summing gives $\Exp[\cut{\Srand}]\le\rho\sum_{j}s_{j}=\rho\nnz(A)=msk/n$; the bound by $m$ is trivial.

For the lower bound, fix $j\in J_{\mathrm{sml}}$ and set $u=s_{j}\rho\in(0,1]$. Since $s_{j}\ge2$ and $\rho\le1/4$ we have $\rho^{s_{j}}\le\rho^{2}\le\rho/4$, and $(1-\rho)^{s_{j}}\le e^{-s_{j}\rho}=e^{-u}$. Hence $\Prb[j\text{ cut}]\ge1-e^{-u}-\rho/4$. For $u\le1$, $1-e^{-u}\ge u-u^{2}/2\ge u/2$. Also $u=s_{j}\rho\ge2\rho$, so $\rho/4\le u/8$, giving $\Prb[j\text{ cut}]\ge u/2-u/8=3u/8\ge s_{j}\rho/4$. Summing over $J_{\mathrm{sml}}$ yields $\frac{\rho}{4}\sum_{j\in J_{\mathrm{sml}}}s_{j}=\frac{m\seff k}{4n}$.

For $j\in J_{2}\setminus J_{\mathrm{sml}}$ we have $s_{j}>1/\rho\ge4$, so $\rho^{s_{j}}\le\rho^{2}\le1/16$ and $(1-\rho)^{s_{j}}\le e^{-s_{j}\rho}<e^{-1}$, whence $\Prb[j\text{ cut}]>1-1/16-e^{-1}>0.56$. Adding the two groups gives \eqref{eq:bracket}.
\end{proof}

\begin{remark}[Truncated support sizes in the lower bound]
The map $s\mapsto1-(1-\rho)^{s}$ is concave, so replacing each $s_{j}$ by the mean $s$ yields an upper bound on $\Exp[\cut{\Srand}]$. On families whose support sizes are heavy tailed, the long supports are almost surely cut and contribute $\Theta(1)$ each. The lower bound therefore uses $\seff$, truncated at $1/\rho$, together with a separate contribution from long supports. The approximation $\Exp[\cut{\Srand}]\approx msk/n$ can overstate the random baseline on these families and, consequently, the coupling advantage.
\end{remark}

\begin{corollary}[Peak of the coupling-advantage envelope, \conditional]
\label{cor:rhostar}
Let $\eta$ be negligible, $\Bpar=1$ and $\rho\le1/4$. Dividing \eqref{eq:bracket} by \eqref{eq:prop1} gives a lower envelope on the coupling advantage with geometric factor $\kappa^{-1}\min\{\sqrt{sk},\,n/\sqrt{sk}\}$, unimodal in $\rho$ with peak $\Omega(\sqrt{n}/\kappa)$ at $\rhostar=1/s$. The two branches rest on different hypotheses. Below the peak the envelope rises as $\sqrt{s\rho n}$ provided $\seff\asymp s$. Above it the envelope falls as $\sqrt{n/(s\rho)}$ provided a constant fraction of constraints have $s_{j}\ge1/\rho$, a condition separate from $\seff\asymp s$.
\end{corollary}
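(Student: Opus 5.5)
The plan is to form the ratio of the random lower bound in \eqref{eq:bracket} to the structured upper bound in \eqref{eq:prop1} with $\eta=0$ and $\Bpar=1$, simplify it in the regime $\delta=\cdelta\sqrt{s/n}$, $a=k/n$, and then check unimodality of the resulting envelope in $\rho=k/n$. Concretely, \Cref{prop:iso} gives $\cut{\SR}\le 9\kappa m(C\cdelta\sqrt{sk}/n+\cdelta^{2}s/n)$. For $k\gtrsim s$ this is $O(\kappa m\sqrt{sk}/n)$. Below that threshold, the $\delta^{2}$ term should be handled as a lower-order contribution that does not affect the peak location; I would state the envelope only where the first term dominates.

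\emph{Lower branch.} Assume $\seff\asymp s$. \Cref{lem:random} gives $\Exp[\cut{\Srand}]\ge msk/(4n)$ up to constants. Dividing by the upper bound gives an advantage of at least
\[
\frac{msk/(4n)}{9C\cdelta\kappa m\sqrt{sk}/n}\asymp\kappa^{-1}\sqrt{sk}=\kappa^{-1}\sqrt{s\rho n},
\]
which is increasing in $\rho$.

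\emph{Upper branch.} Assume a constant fraction of constraints have $s_{j}\ge 1/\rho$. Then the second term of \eqref{eq:bracket} contributes $\Omega(m)$, so the random cut is $\Theta(m)$; it is also capped by $m$. The ratio becomes $m/(\kappa m\sqrt{sk}/n)=\kappa^{-1}n/\sqrt{sk}=\kappa^{-1}\sqrt{n/(s\rho)}$, which is decreasing in $\rho$.

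Taking the smaller of the two admissible bounds gives the envelope $\kappa^{-1}\min\{\sqrt{sk},\,n/\sqrt{sk}\}$. This matches the natural cap $\min\{msk/n,\,m\}$ in \eqref{eq:bracket}, which switches branches at $sk/n=1$.

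\emph{Peak and unimodality.} The two branches are equal when $sk=n$, that is, at $\rhostar=1/s$. There $\sqrt{sk}=\sqrt{n}$, so the peak value is $\Omega(\sqrt{n}/\kappa)$. The minimum of an increasing function and a decreasing function of $\rho$ is unimodal, with its maximum at the crossing. I would also check that $\rhostar\le 1/4$ requires $s\ge4$; this is consistent with the hypothesis $\rho\le1/4$ and is otherwise imposed on the regime.

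\emph{Main obstacle.} The delicate step is that the two branches rely on different hypotheses, and neither hypothesis holds across the whole range of $\rho$. The truncation in $\seff$ at $1/\rho$ means that $\seff\asymp s$ can fail as $\rho$ grows, precisely when the long-support condition begins to hold. I would therefore present the envelope as a piecewise statement: each branch is a valid lower bound only under its own hypothesis. Unimodality with the peak at $1/s$ then holds when both conditions hold on their respective sides of $\rhostar$. Making this explicit is what the \conditional{} tag records, and it is why the corollary is stated with $\Omega(\cdot)$ rather than exact constants.
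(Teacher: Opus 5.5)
Your proposal is correct and follows the paper's proof essentially step for step. The paper likewise takes the specialised bound $\cut{\SR}=O(\kappa m\sqrt{sk}/n)$ from \Cref{prop:iso}, whose condition $k\gtrsim\Bpar s$ is your restriction to the regime where the perimeter term dominates. It then lower-bounds $\Exp[\cut{\Srand}]$ by $\Omega(ms\rho)$ from the first term of \eqref{eq:bracket} when $s\rho\le1$ and $\seff\asymp s$, and by $\Omega(m)$ from the long-support term when $s\rho\ge1$, so the ratio's branches meet at $\rho=1/s$ with value $\sqrt{n}/\kappa$. Your observations that each branch holds only under its own hypothesis, and that $\rhostar\le1/4$ requires $s\ge4$, also match the discussion accompanying the corollary.
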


Every $s_{j}$ entering $\seff$ is at most $1/\rho$, so $\seff\asymp s$ implies $s\rho=O(1)$ and applies to the lower-$\rho$ branch. For uniform support size, the transition occurs at $s\rho=1$.

\begin{proof}[Proof of \Cref{cor:rhostar}]
With $\Bpar=1$ and $\eta$ negligible, \Cref{prop:iso} gives $\cut{\SR}=O(\kappa m\sqrt{sk}/n)=O(\kappa m\sqrt{s\rho/n})$. For the numerator we take the two branches separately. When $s\rho\le1$ and $\seff\asymp s$, the first term of \eqref{eq:bracket} gives $\Exp[\cut{\Srand}]=\Omega(m\seff\rho)=\Omega(ms\rho)$. When $s\rho\ge1$, the bound $\seff\le1/\rho$ limits the lower estimate available from the first term to $\Omega(m)$ at best, and $\seff\asymp s$ imposes the corresponding restriction on support size. The second term gives $\Exp[\cut{\Srand}]\ge0.56|J_{2}\setminus J_{\mathrm{sml}}|=\Omega(m)$ under the stated hypothesis that a constant fraction of constraints have $s_{j}\ge1/\rho$. Combining, $\Exp[\cut{\Srand}]=\Omega(m\min\{1,s\rho\})$ on the respective branches, and their ratio is
\[
\Omega\!\left(\frac{\min\{1,s\rho\}}{\kappa\sqrt{s\rho/n}}\right)
=\frac{1}{\kappa}\cdot
\begin{cases}
\sqrt{s\rho n}=\sqrt{sk}, & s\rho\le1,\\[2pt]
\sqrt{n/(s\rho)}=n/\sqrt{sk}, & s\rho\ge1,
\end{cases}
\]
which is $\kappa^{-1}\min\{\sqrt{sk},n/\sqrt{sk}\}$. The two branches are increasing and decreasing in $\rho$ respectively and meet where $sk=n$, that is at $\rho=1/s$, where the common value is $\sqrt{n}/\kappa$.
\end{proof}

The ratio of a lower bound to an upper bound gives a lower bound on the true coupling advantage. The constants, the ratio $\seff/s$ and $\kappa$ vary with $\rho$, so the peak height is specified by $\Omega$ rather than $\Theta$. Both branches inherit the restriction $\rho\le1/4$ from \Cref{lem:random}, and the peak $\rhostar=1/s$ lies in this range when $s\ge4$. The set cover family satisfies this condition with $\hat s=9.6$. In vertex cover, every row has support two, placing $\rhostar=1/2$ outside the proved range of \Cref{lem:random}. The $\rho$ sweep on vertex cover provides a descriptive comparison of curve shapes, rather than a test of this corollary.

\subsection{The degree-of-freedom certificate}

\begin{proposition*}[\Cref{prop:rank} restated]
If $\rank(A_{E(S),S})=|S|=k$ then $\inc$ is the unique feasible point of $\sub{S}$. Consequently $\rank(A_{E(S),S})<k$ is necessary for improvement, and $\cutE{S}+\inE{S}<k$ is sufficient for that necessary condition.
\end{proposition*}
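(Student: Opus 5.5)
The proof is a short linear-algebra argument in three steps.

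\emph{Step 1: reduce the equalities to a system in $x_{S}$.} Let $x$ be any feasible point of $\sub{S}$. Then $x_{i}=\inc_{i}$ for $i\notin S$, so $d=x-\inc$ is supported on $S$. For every equality row $j\in E$, both $x$ and $\inc$ satisfy $A_{j}x=b_{j}=A_{j}\inc$. Hence $A_{j}d=\sum_{i\in S}A_{ji}d_{i}=0$. If $j\in E\setminus E(S)$, the row has no nonzero in $S$ and this is trivial. The rows in $E(S)$ therefore give $A_{E(S),S}\,d_{S}=0$.

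\emph{Step 2: full column rank forces uniqueness.} If $\rank(A_{E(S),S})=|S|=k$, the submatrix has full column rank, so its kernel is trivial. Thus $d_{S}=0$, which gives $x=\inc$. Since $\inc$ is itself feasible for $\sub{S}$ (it is feasible in $\feas$ and respects the fixings), it is the unique feasible point. The optimal repair value then equals $c^{\top}\inc$, so $\gain(S)=0$ and the round cannot improve. Note that the inequality rows, integrality and bounds play no role: they can only shrink the feasible set further. The necessity claim is the contrapositive: if $\gain(S)>0$, some feasible $x\neq\inc$ exists, which forces a nonzero kernel vector and hence $\rank(A_{E(S),S})<k$.

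\emph{Step 3: the row-count condition.} The rank of a matrix is at most its number of rows, and $A_{E(S),S}$ has $|E(S)|$ rows. The step that needs care is matching $|E(S)|$ to $\cutE{S}+\inE{S}$. Reading $\cutE{S}$ as the number of equality rows meeting both $S$ and its complement, and $\inE{S}$ as the number of equality rows whose support lies inside $S$, these two sets partition $E(S)$. I will state this reading explicitly, since these quantities are only used earlier, not defined. Then $\rank(A_{E(S),S})\le|E(S)|=\cutE{S}+\inE{S}$, and $\cutE{S}+\inE{S}<k$ gives $\rank<k$.

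The main obstacle is the definitional point in Step 3. A minor additional point is to confirm that rows in $E$ not meeting $S$ contribute nothing, which Step 1 already handles.
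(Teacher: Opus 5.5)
Your proof is correct and follows essentially the same argument as the paper. The paper writes the equality rows meeting $S$ as the affine system $A_{E(S),S}x_{S}=b_{E(S)}-A_{E(S),\bar S}\inc_{\bar S}$ with $\inc_{S}$ as one solution, which is equivalent to your kernel argument on $d=x-\inc$, and it bounds the rank by the row count $|E(S)|=\cutE{S}+\inE{S}$ using the same straddle-or-inside partition of $E(S)$ that you state explicitly.
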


\begin{proof}
In $\sub{S}$ every variable outside $S$ is fixed to $\inc$. An equality row $j\in E$ whose support lies entirely in $\bar S$ is satisfied identically by $\inc$ and constrains nothing. Each remaining equality row meets $S$, and collectively these rows impose the affine system
\[
A_{E(S),S}\,x_{S}=b_{E(S)}-A_{E(S),\bar S}\,\inc_{\bar S},
\]
of which $\inc_{S}$ is a solution. If the coefficient matrix has full column rank $k$ then this solution is unique, so every point of $\feas(\sub{S})$, integral or otherwise, agrees with $\inc$ on $S$ and hence equals $\inc$; the objective cannot decrease. The matrix $A_{E(S),S}$ has exactly $|E(S)|=\cutE{S}+\inE{S}$ rows, since an equality row meets $S$ iff its support either straddles $S$ or lies inside it, and rank never exceeds the number of rows.
\end{proof}

\begin{remark}[Direction of the implication]
The diagnostic chain is $\cutE{S}+\inE{S}<k\Rightarrow\rank<k\Rightarrow$ improvement is not excluded. A row count of $k$ or more remains compatible with rank deficiency and improvement. The rank itself determines the certificate, while the row count provides a sufficient check for rank deficiency. The corresponding figure therefore uses $\rank/k$ on its horizontal axis.
\end{remark}

\begin{remark}[Tight inequalities]
\label{rem:tight}
Inequality rows that are tight at $\inc$ constrain local motion. Counting them together with equality rows gives a practical heuristic: an inequality tight at $\inc$ may become slack at a better point. We distinguish this active-constraint diagnostic from the equality-rank certificate wherever it is used.
\end{remark}

\subsection{The expander lower bound}

\begin{proposition}[Expander lower bound]
\label{prop:expander}
Let the incidence graph be left $d$-regular with $s_{j}\equiv s\ge2$ and a $(\gamma n,(1-\epsilon)d)$ lossless expander, with $\epsilon\le1/2$. Then for every $S$ with $|S|=k\le\gamma n$,
\[
\cut{S}\ge dk\Bigl(1-\epsilon-\frac{\epsilon}{s-1}\Bigr)\ge(1-2\epsilon)dk\ge(1-2\epsilon)\Exp[\cut{\Srand}].
\]
\end{proposition}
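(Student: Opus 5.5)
The plan is a double count of edges between $S$ and its constraint neighbourhood in the bipartite incidence graph, with variables on the left and constraints on the right. I read the lossless-expander hypothesis in its standard form: every left set $S$ with $|S|\le\gamma n$ has neighbourhood $N(S)=\{j:\supp(A_{j})\cap S\neq\emptyset\}$ of size $|N(S)|\ge(1-\epsilon)d|S|$.

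Fix $S$ with $|S|=k\le\gamma n$. Left $d$-regularity means exactly $dk$ edges leave $S$, and all of them land in $N(S)$. Split $N(S)=I\sqcup K$ into two parts.
\begin{itemize}
\item $I$ is the set of constraints whose support lies entirely inside $S$.
\item $K$ is the set of constraints that straddle $S$ and its complement, so $|K|=\cut{S}$ by \eqref{eq:cut}.
\end{itemize}
Since $s_{j}\equiv s$, each $j\in I$ receives exactly $s$ edges from $S$. Each $j\in K$ receives at least one.

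Counting edges gives $dk\ge s|I|+|K|=(s-1)|I|+|N(S)|$. Combined with the expansion bound, this yields
\[
|I|\le\frac{dk-|N(S)|}{s-1}\le\frac{\epsilon dk}{s-1}.
\]
Therefore
\[
\cut{S}=|N(S)|-|I|\ge(1-\epsilon)dk-\frac{\epsilon dk}{s-1}=dk\Bigl(1-\epsilon-\frac{\epsilon}{s-1}\Bigr),
\]
which is the first inequality. The second follows from $s\ge2$, which gives $\epsilon/(s-1)\le\epsilon$.

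For the last inequality I invoke the upper bound in \eqref{eq:bracket} from \Cref{lem:random}: $\Exp[\cut{\Srand}]\le msk/n$. Under left $d$-regularity and uniform support, $ms=\nnz(A)=nd$, so $msk/n=dk$. Hence $(1-2\epsilon)dk\ge(1-2\epsilon)\Exp[\cut{\Srand}]$, and this step requires $1-2\epsilon\ge0$, i.e. $\epsilon\le1/2$.

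I expect no deep obstacle. The main care points are bookkeeping ones.
\begin{itemize}
\item The expansion definition used must be the left-set neighbourhood bound with factor $(1-\epsilon)d$.
\item Constraints fully inside $S$ must be counted separately from cut ones. They are the only neighbours that are not cut, and they are also what makes the bound sharper than the unique-neighbour argument, which gives only $(1-2\epsilon)dk$.
\item The identity $ms=nd$ must be checked so that the random upper bound matches $dk$ exactly.
\end{itemize}
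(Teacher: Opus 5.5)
Your proposal is correct and follows essentially the same route as the paper. Both arguments double-count the $dk$ incidences from the $d$-regular side and use lossless expansion to bound the fully contained constraints (the paper's $n_{s}$, your $|I|$) by $\epsilon dk/(s-1)$. Both then finish with the union-bound estimate $\Exp[\cut{\Srand}]\le\rho\nnz(A)=dk$ from \Cref{lem:random}. The only cosmetic difference is that the paper indexes neighbours by intersection size $n_{r}$ and discards the terms with $2\le r<s$, which gives the same inequality as lumping all straddling constraints into $K$ with at least one incidence each.
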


\begin{proof}
Let $n_{r}=\#\{j:|\supp A_{j}\cap S|=r\}$ for $r\ge1$. Every constraint meeting $S$ is counted exactly once, so $\sum_{r\ge1}n_{r}=|N(S)|$, and counting incidences from the variable side, which is $d$-regular, gives $\sum_{r\ge1}r\,n_{r}=dk$. Subtracting,
\[
dk-|N(S)|=\sum_{r\ge2}(r-1)n_{r}\;\ge\;(s-1)n_{s},
\]
since all terms are non-negative and the term at $r=s$ contributes $(s-1)n_{s}$. Lossless expansion gives $|N(S)|\ge(1-\epsilon)dk$, so $n_{s}\le\epsilon dk/(s-1)$. A constraint meeting $S$ fails to be cut exactly when its whole support lies in $S$, which for uniform support size $s$ means $|\supp A_{j}\cap S|=s$; therefore $\cut{S}=|N(S)|-n_{s}\ge(1-\epsilon)dk-\epsilon dk/(s-1)$. Since $s\ge2$ we have $\epsilon/(s-1)\le\epsilon$, giving the second inequality. The third follows from \Cref{lem:random}, whose upper bound reads $\Exp[\cut{\Srand}]\le\rho\nnz(A)=\rho nd=dk$, multiplied by the non-negative factor $1-2\epsilon$, using $\epsilon\le1/2$. For $\epsilon>1/2$, the expansion hypothesis permits small $\cut{S}$, and the comparison gives no positive lower bound.
\end{proof}

The construction is realisable: random $d$-regular bipartite graphs are lossless expanders with high probability, and explicit constructions are known \citep{capalbo2002expanders}. Taking $\sum_{i\in\supp A_{j}}x_{i}\ge1$ over such a graph gives a set cover instance in the family. Comparing this lower bound with \Cref{prop:iso} constrains the jointly attainable locality, density and region parameters wherever their hypotheses overlap. Limits such as $\eta\to1$ or $\kappa\to\infty$ describe possible failures of favourable layout conditions; establishing either limit requires assumptions on the instance sequence and region scale.

\subsection{Fragmentation and the dimension of the canvas}

\begin{corollary}[Cost of a fragmented mask, \conditional]
\label{cor:blocks}
Under the hypotheses of \Cref{prop:iso}, let $\region$ consist of $\Bpar$ disjoint near-circular blocks of total area $a=k/n$, with $\delta=\cdelta\sqrt{s/n}$. Then
\[
\cut{\SR}=O\!\left(\kappa\frac{m}{n}\bigl(\sqrt{sk\Bpar}+\Bpar s\bigr)\right)+\eta m .
\]
If in addition $\seff\asymp s$ and $s\rho\le1$, so that \Cref{lem:random} bounds the random baseline at $\Theta(msk/n)$, then
\[
\mathrm{Gain}^{-1}=O\!\left(\kappa\sqrt{\frac{\Bpar}{sk}}+\kappa\frac{\Bpar}{k}\right)+\frac{\eta n}{sk}.
\]
There are two thresholds. At $\Bdag\asymp k/(\cdelta^{2}s)$, the block radius reaches the locality scale $\delta$, and the perimeter and boundary-component terms in the upper bound exchange dominance. At $\Bddag\asymp k$, each block holds $O(1)$ variables, and this bound no longer guarantees an advantage over random selection.
\end{corollary}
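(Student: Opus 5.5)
The plan is to apply the first inequality of \Cref{prop:iso}, $\cut{\SR}\le9\kappa m(\delta\per(\region)+\Bpar\delta^{2})+\eta m$, which holds for an arbitrary union of cells and needs no regularity beyond the component count. The first step is to compute the perimeter of $\Bpar$ disjoint near-circular blocks of total area $a=k/n$. Taking equal blocks of area $a/\Bpar$, each has perimeter at most $C_{0}\sqrt{a/\Bpar}$, with $C_{0}$ an absolute constant such as $4$ for grid-approximated discs (\Cref{rem:iso-role}). Hence $\per(\region)\le C_{0}\Bpar\sqrt{a/\Bpar}=C_{0}\sqrt{\Bpar a}$. This is exactly the \Areg{} form, so the second inequality of \Cref{prop:iso} applies with $C=C_{0}$.

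Substituting $\delta=\cdelta\sqrt{s/n}$ and $a=k/n$ gives $\delta\per(\region)\le C_{0}\cdelta\sqrt{sk\Bpar}/n$ and $\Bpar\delta^{2}=\cdelta^{2}\Bpar s/n$. This yields $\cut{\SR}=O(\kappa(m/n)(\sqrt{sk\Bpar}+\Bpar s))+\eta m$. Unlike the specialisation in \Cref{prop:iso}, both terms are retained, so the condition $k\gtrsim\Bpar s$ is not needed. For unequal blocks I would note that $\sum_{c}\sqrt{a_{c}}\le\sqrt{\Bpar a}$ by Cauchy--Schwarz, so the equal-area case is the worst case.

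For the gain, I would use the lower bound of \Cref{lem:random}. Under $\seff\asymp s$ and $s\rho\le1$ (with $\rho\le1/4$), this gives $\Exp[\cut{\Srand}]=\Omega(msk/n)$, and together with the upper bound $msk/n$ we get $\Theta(msk/n)$. Dividing the upper bound on $\cut{\SR}$ by $msk/n$ gives $\mathrm{Gain}^{-1}=O(\kappa\sqrt{\Bpar/(sk)}+\kappa\Bpar/k)+\eta n/(sk)$. Here $\mathrm{Gain}$ is the ratio $\Exp[\cut{\Srand}]/\cut{\SR}$, so its inverse is bounded by the upper bound divided by the lower bound.

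The thresholds come from comparing terms. The perimeter term and the component term are equal when $\sqrt{sk\Bpar}\asymp\Bpar s$, that is, at $\Bpar\asymp k/s$; with the constants restored this is $\Bdag\asymp k/(\cdelta^{2}s)$. At this point each block has area $a/\Bpar\asymp\cdelta^{2}s/n=\delta^{2}$, so its radius is of order $\delta$, which matches the interpretation in the statement. At $\Bpar\asymp k$, the term $\kappa\Bpar/k$ is of order $\kappa\ge$ const, so the bound on $\mathrm{Gain}^{-1}$ is $\Omega(1)$ and no advantage is guaranteed. Since $a/\Bpar=1/n$, each block holds $O(1)$ variables under the density normalisation.

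The main obstacle is making "near-circular" precise. It must guarantee a perimeter bound $O(\sqrt{a_{c}})$ per block even when blocks are only a few cells across, where grid discretisation inflates the perimeter. I would handle this by defining near-circular to mean $\per\le C_{0}\sqrt{\text{area}}$ per block, which makes the proof a direct corollary. The last clause, that each block holds $O(1)$ variables, needs variables to be spread roughly uniformly under $\lay$; I would state this as an interpretive remark rather than prove it.
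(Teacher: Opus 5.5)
Your proposal is correct and follows essentially the paper's route. You bound the total block perimeter by $O(\sqrt{\Bpar a})$, substitute into the first inequality of \Cref{prop:iso}, divide by the $\Theta(msk/n)$ baseline from \Cref{lem:random}, and read off $\Bdag$ and $\Bddag$ by comparing terms; the paper instead gets $\Bdag$ by solving $r=\delta$ for the disc radius, which agrees with your term-balancing up to the constant $C^{2}$. Your definition of near-circular as $\per\le C_{0}\sqrt{\text{area}}$ per block, with Cauchy--Schwarz for unequal blocks, fixes the constant more cleanly than the paper's appeal to \Cref{rem:iso-role}. One caution: the square constant $4$ there is a lower bound on the perimeter of a union of cells, so a digitised disc needs a somewhat larger $C_{0}$, asymptotically $8/\sqrt{\pi}$. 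This affects your ``such as $4$'' and the paper's ``at most'' factor alike, but not the $O(\cdot)$ result.
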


\begin{proof}
For $\Bpar$ disjoint discs of total area $a$, each has radius $r=\sqrt{a/(\pi\Bpar)}$ and the total perimeter is $\Bpar\cdot2\pi r=2\sqrt{\pi a\Bpar}$. A union of cells of the same area has a larger perimeter, by at most the factor $4/(2\sqrt{\pi})$ of \Cref{rem:iso-role}, so the disc calculation is correct up to that constant. Substituting into \Cref{prop:iso} with $a=k/n$ gives $\delta\per(\region)=2\sqrt{\pi}\cdelta\sqrt{sk\Bpar}/n$ and $\Bpar\delta^{2}=\Bpar\cdelta^{2}s/n$, which is the first display. Under the added hypotheses \Cref{lem:random} gives $\Exp[\cut{\Srand}]\asymp msk/n$, and dividing the first display by it gives the second, the last term being $\eta m$ over that same baseline. Solving $r=\delta$, that is $\sqrt{(k/n)/(\pi \Bpar)}=\cdelta\sqrt{s/n}$, gives $\Bpar=k/(\pi\cdelta^{2}s)$. The second threshold is where the leading term of $\mathrm{Gain}^{-1}$ reaches $\Theta(1)$, namely $\Bpar\asymp k$.
\end{proof}

The condition $\seff\asymp s$ specialises \Cref{lem:random} to the support-size regime used by this corollary. On a heavy-tailed family, applying \Cref{lem:random} with the truncated mean introduces the factor $s/\seff$ in the inverse-advantage bound. Retaining $\cdelta$ in $\Bdag$ makes the locality scale explicit: a family whose measured $\hat\delta/\sqrt{s/n}$ is $5$ has a threshold twenty-five times smaller than the expression omitting that constant. Similarly, $\Bdag$ and the crossover $\Bpar\asymp(C/\cdelta)^{2}k/s$ at which the terms of \Cref{prop:iso} exchange dominance agree up to a constant factor, which may be a few tens. The scale $\Bpar\approx sk$ exceeds the feasible range of the whole-cell release variant: at most $k$ cells are selected, and the numbers of outer boundaries and enclosed holes give $\Bpar=O(k)$. Counting boundary components includes the contribution of holes. The $\Bpar\delta^{2}$ term captures this fine-fragmentation regime.

\begin{corollary}[Dimension and locality, \conditional]
\label{cor:dimension}
Let $\seff\asymp s$ and $s\rho\le1$. On a $D$-dimensional canvas with ideal locality scale $\delta_{D}=\Theta((s/n)^{1/D})$, a region of volume $a$ whose surface measure is $O(a^{(D-1)/D})$ satisfies $\cut{}=O(\kappa m\,a^{(D-1)/D}\delta_{D})+\eta_{D}m$, so its geometric advantage obeys
\[
\mathrm{Gain}\;\gtrsim\;\frac{1}{\kappa}\cdot\frac{s\rho}{a^{(D-1)/D}(s/n)^{1/D}}\;=\;\frac{s}{\kappa}\Bigl(\frac{k}{s}\Bigr)^{1/D},
\]
which for $k>s$ is strictly decreasing in $D$. Carrying the locality defect as well,
\[
\mathrm{Gain}(D)=\Omega\Bigl(\min\Bigl\{\tfrac{s}{\kappa}(k/s)^{1/D},\ \tfrac{s\rho}{\eta_{D}}\Bigr\}\Bigr).
\]
\end{corollary}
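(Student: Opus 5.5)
The plan is to rerun the proof of \Cref{prop:iso} in $\reals^{D}$ and divide the result by the lower branch of \Cref{lem:random}, as in \Cref{cor:rhostar}.

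\emph{Upper bound.} Split constraints by locality. At most $\eta_{D}m$ are nonlocal at scale $\delta_{D}$; count them all as cut. For a local constraint cut by $\SR$, \Cref{lem:crossing} holds verbatim in $\reals^{D}$, since its proof uses only connectedness of segments. So the constraint's barycentre lies in the tube $(\partial\region)^{2\delta_{D}}$.

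Cover this tube by balls of radius $3\delta_{D}$. Place centres on a $\delta_{D}$-net of $\partial\region$; under the surface-measure hypothesis $O(a^{(D-1)/D})$, the net has $O(a^{(D-1)/D}/\delta_{D}^{D-1})$ points, plus a component term. The component term I would absorb by assuming $a^{1/D}\gtrsim\delta_{D}$, i.e.\ a single regular region.

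The $D$-dimensional version of \Aden{} states that a ball of radius $r\ge\delta_{D}$ holds at most $\kappa m r^{D}$ barycentres. Multiplying gives
\[
\cut{\SR}=O\bigl(\kappa m\,a^{(D-1)/D}\delta_{D}\bigr)+\eta_{D}m,
\]
which is the first claim.

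\emph{Ratio.} Under $\seff\asymp s$ and $s\rho\le1$, \Cref{lem:random} gives $\Exp[\cut{\Srand}]=\Omega(ms\rho)$. Dividing yields the displayed lower bound. Substituting $a=\rho=k/n$ and $\delta_{D}=(s/n)^{1/D}$, the powers of $n$ combine as $n^{-1}\cdot n^{(D-1)/D}\cdot n^{1/D}=1$. The remaining factor is $s\,k\,k^{-(D-1)/D}s^{-1/D}=s(k/s)^{1/D}$.

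\emph{Monotonicity.} For $k>s$ we have $\log(k/s)>0$, so $(k/s)^{1/D}$ is strictly decreasing in $D$.

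\emph{Locality defect.} Keep both terms of the upper bound. Then
\[
\frac{ms\rho}{X+Y}\ge\frac{ms\rho}{2\max\{X,Y\}}=\tfrac12\min\Bigl\{\frac{ms\rho}{X},\frac{ms\rho}{Y}\Bigr\},
\]
with $X=\kappa m\,a^{(D-1)/D}\delta_{D}$ and $Y=\eta_{D}m$. This gives the $\min$ form, with the second branch equal to $s\rho/\eta_{D}$.

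\emph{Main obstacle.} The hard step is the covering count in $D$ dimensions. Unlike the polyline case, a general surface of measure $L$ need not admit a $\delta$-net of size $O(L/\delta^{D-1})$, because thin spikes can break this. I would first try to use that $\region$ is a union of grid cells. Then $\partial\region$ is a union of $(D-1)$-faces of side $1/g$, and when $\delta_{D}\ge1/g$ its $\delta_{D}$-covering number is bounded by face area over $\delta_{D}^{D-1}$ plus the number of components. I would state this as the interpretation of the surface-measure hypothesis, consistent with the \conditional{} tag.
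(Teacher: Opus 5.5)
Your ratio computation, monotonicity step and $\tfrac12\min$ argument for the locality defect are exactly the paper's proof. The difference is upstream. The paper never derives the $D$-dimensional cut bound. It takes $\cut{\SR}=O(\kappa m\,a^{(D-1)/D}\delta_{D})+\eta_{D}m$ as the conditional premise, the $D$-dimensional analogue of \Cref{prop:iso}, and only does the algebra. Your rerun of \Cref{prop:iso} in $\reals^{D}$ is therefore extra work.

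That rerun is sound for $D\le2$. There the boundary is finitely many points or a union of $\Bpar$ closed polylines, so a $\delta_{D}$-net of size $O(\text{measure}/\delta_{D}^{D-1}+\text{components})$ exists.

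For $D\ge3$, the grid-cell justification you sketch is false. Take a straight row of $M$ cells of side $h=1/g$. It is one component with boundary measure $\Theta(Mh^{D-1})$. Its $\delta_{D}$-covering number, however, is $\Theta(Mh/\delta_{D})$. This exceeds measure over $\delta_{D}^{D-1}$ by a factor of order $(\delta_{D}/h)^{D-2}$, which is unbounded when $\delta_{D}\gg1/g$. The stated hypothesis does not exclude this case either: attaching such a row, with $Mh^{D-1}\lesssim a^{(D-1)/D}$, to a compact region keeps the surface measure $O(a^{(D-1)/D})$.

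So in $D\ge3$ the covering count, or the cut bound itself, has to be assumed outright. That is your stated fallback, and it is in effect what the paper does.

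Like the paper, you also invoke \Cref{lem:random} without checking its requirement $\rho\le1/4$. This follows from $s\rho\le1$ only when $s\ge4$.
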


\begin{proof}
Ideal locality in $D$ dimensions requires a set of diameter $\delta_{D}$ to hold $s$ variables at density $n$, so $n\delta_{D}^{D}\asymp s$. Substituting $a=\rho=k/n$ into the displayed ratio gives $s\rho^{1/D}(n/s)^{1/D}=s(\rho n/s)^{1/D}=s(k/s)^{1/D}$. At $D=2$ this is $\sqrt{sk}$ and at $D=1$ it is $k$, so the exponent is consistent with \Cref{prop:iso}. Monotonicity in $D$ is immediate from $k/s>1$. For the last display, $\mathrm{Gain}\ge\Exp[\cut{\Srand}]/(\mathrm{Geo}+\eta_{D}m)\ge\frac12\min\{\Exp[\cut{\Srand}]/\mathrm{Geo},\,\Exp[\cut{\Srand}]/(\eta_{D}m)\}$, and under the stated hypotheses $\Exp[\cut{\Srand}]\asymp ms\rho$, so the second argument of the minimum is $s\rho/\eta_{D}$.
\end{proof}

The locality contribution retains the factor $s\rho/\eta_{D}$. Replacing it by $1/\eta_{D}$ changes the bound by $1/(s\rho)=n/(sk)$, a factor of one hundred at $s=10$, $k=10^{3}$ and $n=10^{6}$. The geometric result is a lower bound on the advantage, obtained from the upper bound on the cut in \Cref{prop:iso}; a matching lower bound on the cut would be needed to establish the order of the advantage.

The geometric term favours a one-dimensional canvas, while the locality defect $\eta_{D}$ governs the choice of $D=2$. For intrinsically two-dimensional structure, a $\sqrt{n}\times\sqrt{n}$ grid graph has bandwidth $\Theta(\sqrt{n})$, so a one-dimensional layout loses locality and $\eta_{1}\to1$. For intrinsically one-dimensional structure, such as staircase or banded programs arising from multi-period planning and time-expanded network flow, $\eta_{1}\approx\eta_{2}\approx0$ and the bound favours a one-dimensional canvas. The ablation design includes a test of this prediction with a one-dimensional variant on a banded subfamily. This variant uses its own weights, while the shared-weight experiments concern the two-dimensional network.

\subsection{Resolution and granularity}

We define the perimeter of a destroy set through a reference grid finer than any grid used by the policy. A definition over all measurable regions would admit $\region=\lay(S)$, a finite point set with empty interior and $\partial\region=\region$, giving $\mathcal{H}^{1}(\partial\region)=0$ for every destroy set. The reference grid supplies the regularity needed for a geometric measure of the set.

\begin{definition}
\label{def:per-of-set}
Fix a reference resolution $g_{0}\gg g$ and call $S\subseteq\vars$ \emph{resolvable} at $g_{0}$ if some union $\region$ of $g_{0}$-cells has $\lay^{-1}(\region)=S$, which holds as soon as the $g_{0}$-grid separates $S$ from $\vars\setminus S$. For such $S$ let $\per(S)=\min\{\mathcal{H}^{1}(\partial\region):\region$ a union of $g_{0}$-cells, $\lay^{-1}(\region)=S\}$, the minimum being over a finite family and therefore attained, and let $\Bstar$ be the number of boundary components of a minimiser. Every such $\per(S)$ is a positive multiple of $1/g_{0}$.
\end{definition}

\begin{theorem}[Discretisation error, \conditional]
\label{thm:resolution}
Let $\Sstar$ be resolvable at $g_{0}$. Assume \Adens{}, that every measurable $U$ with $|U|\ge2/g^{2}$ contains at most $\rhomax|U|$ variables, and \Alip{}, that there exists $L$ with $\gain(\Sstar)-\gain(\hat S)\le L|\Sstar\setminus\hat S|$ for every union of cells $\hat S\subseteq\Sstar$. Let $\hat S$ be the best destroy set representable as a union of cells at resolution $g$. Then, deterministically,
\[
\gain(\Sstar)-\gain(\hat S)\;\le\;L\rhomax\Bigl(\frac{C_{1}\per(\Sstar)}{g}+\frac{C_{2}\Bstar}{g^{2}}\Bigr),
\qquad C_{1}=2\sqrt{2},\ C_{2}=2\pi .
\]
\end{theorem}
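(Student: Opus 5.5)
The plan is to exhibit one explicit union of $g$-cells $\hat S_{0}\subseteq\Sstar$ and bound $|\Sstar\setminus\hat S_{0}|$. Since $\hat S$ is the best representable set, $\gain(\hat S)\ge\gain(\hat S_{0})$, and \Alip{} applied to $\hat S_{0}$ then gives $\gain(\Sstar)-\gain(\hat S)\le L|\Sstar\setminus\hat S_{0}|$. The choice is the inner approximation. Let $\region^{\star}$ be a minimiser in \Cref{def:per-of-set}, so that $\lay^{-1}(\region^{\star})=\Sstar$, $\mathcal{H}^{1}(\partial\region^{\star})=\per(\Sstar)$, and $\partial\region^{\star}$ has $\Bstar$ components. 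Set $\hat\region=\bigcup\{Q:\ Q \text{ a } g\text{-cell},\ Q\subseteq\region^{\star}\}$ and $\hat S_{0}=\lay^{-1}(\hat\region)$. Then $\hat S_{0}\subseteq\Sstar$ is a union of cells, and $\Sstar\setminus\hat S_{0}\subseteq\lay^{-1}(U)$ with $U=\region^{\star}\setminus\hat\region$.

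Next I localise $U$ near the boundary. Every point of $U$ lies in a $g$-cell $Q$ that meets both $\region^{\star}$ and its complement, so \Cref{lem:crossing}, applied along a segment inside the convex cell $Q$, shows that $Q$ meets $\partial\region^{\star}$. Since $\diam Q=\sqrt{2}/g$, this gives $U\subseteq(\partial\region^{\star})^{\sqrt{2}/g}$, the tube of that radius around the boundary. Its area I bound by the Steiner-type estimate for a union of $\Bstar$ closed rectifiable curves of total length $\per(\Sstar)$:
\[
|(\partial\region^{\star})^{r}|\le 2r\,\per(\Sstar)+\pi r^{2}\Bstar .
\]
I would prove this component by component. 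For a closed curve of length $\ell$, the $r$-neighbourhood has area at most $2r\ell+\pi r^{2}$: cover the curve by the Minkowski sum of a segment chain with a disc, or use the Crofton/coarea bound for the tube around a polyline. Because the boundary is a finite rectilinear polyline on the $g_{0}$-grid, the polyline case suffices. With $r=\sqrt{2}/g$ this gives exactly $2\sqrt{2}\,\per(\Sstar)/g+2\pi\Bstar/g^{2}$, which matches $C_{1}=2\sqrt{2}$ and $C_{2}=2\pi$. That agreement is strong evidence that this is the intended route.

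Finally I count variables with \Adens{}. I apply it to the measurable set $U'=(\partial\region^{\star})^{\sqrt{2}/g}\supseteq U$. If $|U'|\ge2/g^{2}$, then the number of variables in $U'$ is at most $\rhomax|U'|$. If $|U'|<2/g^{2}$, I enlarge $U'$ to a set of area exactly $2/g^{2}$; this is still dominated by the bound, because $2\pi\Bstar/g^{2}\ge2/g^{2}$ whenever $\Bstar\ge1$, and $\Sstar$ nonempty forces $\Bstar\ge1$. Combining the steps gives $L|\Sstar\setminus\hat S_{0}|\le L\rhomax(C_{1}\per/g+C_{2}\Bstar/g^{2})$.

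The main obstacle is the tube-area inequality with the sharp constant $2r\ell+\pi r^{2}$ per component. The naive disc covering used in \Cref{prop:iso} loses constants. For a closed polyline I would argue that the tube is the union over segments of rectangles of width $2r$ (total area at most $2r\ell$) together with discs at the vertices. The vertex discs overcount, so I would instead use that the area of the $r$-neighbourhood is bounded by $\int_{0}^{r}\mathcal{H}^{1}(\partial\text{tube}_{t})\,dt$, combined with the bound $\mathcal{H}^{1}(\partial\text{tube}_{t})\le2\ell+2\pi t$ for each connected curve. That perimeter bound is the step needing the most care.
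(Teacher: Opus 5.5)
Your proposal is correct and is essentially the paper's proof. It uses the inner approximation by cells inside the minimiser $\Rstar$, the inclusion $\Sstar\setminus\hat S_{0}\subseteq\lay^{-1}\bigl((\partial\Rstar)^{\sqrt{2}/g}\bigr)$, and the tube-area bound $2t\per(\Sstar)+\pi\Bstar t^{2}$ at $t=\sqrt{2}/g$, then applies \Adens{} and \Alip{}. The paper meets the area cutoff by noting that the tube contains a disc of area $2\pi/g^{2}$, which makes your enlargement step unnecessary but harmless.

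The paper simply asserts the tube-area bound, and you do not need the coarea argument to prove it. Add the segments of each boundary component one at a time, each starting at an already-covered point. The new segment's $r$-stadium minus the already-covered disc at that point has area exactly $2r\ell_{i}$. This gives $2r\ell+\pi r^{2}$ per component directly.
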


\begin{proof}
Let $\Rstar$ attain the infimum in \Cref{def:per-of-set}. Let $\mathcal{Q}_{\mathrm{in}}$ be the cells contained in $\intr\Rstar$ and put $\tilde S=\lay^{-1}(\bigcup\mathcal{Q}_{\mathrm{in}})$. Then $\tilde S$ is a union of cells with $\tilde S\subseteq\Sstar$, so $|\tilde S|\le k$ and $\tilde S$ is feasible; by optimality of $\hat S$ among unions of cells, $\gain(\hat S)\ge\gain(\tilde S)$.

If $i\in\Sstar\setminus\tilde S$ then $\lay(i)\in\Rstar$ but the cell containing $\lay(i)$ is not contained in $\intr\Rstar$, so that cell meets $\partial\Rstar$ and therefore $\dist(\lay(i),\partial\Rstar)\le\sqrt{2}/g$, the cell diameter. Hence $\Sstar\setminus\tilde S\subseteq\lay^{-1}\bigl((\partial\Rstar)^{\sqrt{2}/g}\bigr)$. The boundary consists of $\Bstar$ closed curves of total length $\per(\Sstar)$, and a tube of radius $t$ about them has area at most $2t\per(\Sstar)+\pi\Bstar t^{2}$; with $t=\sqrt{2}/g$ this is $2\sqrt{2}\per(\Sstar)/g+2\pi\Bstar/g^{2}$. The tube also contains a disc of radius $t$, so its area is at least $2\pi/g^{2}$ and the cutoff in \Adens{} is met; applying \Adens{} bounds $|\Sstar\setminus\tilde S|$ by $\rhomax$ times the area, and \Alip{} converts the count into a value difference.
\end{proof}

\begin{remark}[Density cutoff in \Adens{}]
The cutoff makes the density hypothesis meaningful for a finite point set. A disc of radius $\varepsilon$ about any $\lay(i)$ contains one variable in area $\pi\varepsilon^{2}$, which would force $\rhomax\ge1/(\pi\varepsilon^{2})$ and hence $\rhomax=\infty$ if arbitrarily small discs were included. Likewise, \Aden{} applies to balls of radius $r\ge\delta$. The proof uses a tube of radius $\sqrt{2}/g$, whose area meets the cutoff.
\end{remark}

\begin{remark}[Sub-cell components and random grid offsets]
\label{rem:remainder}
The second term, $O(L\rhomax\Bstar/g^{2})$, measures loss from sub-cell components. A term $O(Lk/g^{2})$ would instead scale with the variable budget $k$. The boundary-component term dominates when $\per(\Rstar)\le\Bstar/g$, corresponding to components smaller than one cell. Under \Adens{} the bound is deterministic and applies to every grid offset. Random offsets offer two additional properties. A component of diameter $\varepsilon/g$ that straddles a grid line contributes no complete cell, whereas under a uniform offset it lies in a single cell with probability at least $(1-\varepsilon)^{2}$. Averaging over offsets also replaces $\rhomax$ by the mean density along the boundary, $\Exp_{u}|\Sstar\setminus\tilde S_{u}|\le\frac{C}{g}\int_{\partial\Rstar}\rho_{1/g}\,d\mathcal{H}^{1}$. This form can sharpen the bound on clustered instances when the mask boundary passes through sparse regions.
\end{remark}

The next corollary considers a compact optimum whose reference-grid minimiser consists of $\Bstar$ near-circular blocks. Its derivation also uses the boundary-density substitution in \Cref{rem:remainder}, associated with averaging over random grid offsets.

\begin{corollary}[Saturation resolution, \conditional]
\label{cor:gstar}
Let $\zeta=\gain(\Sstar)/(Lk)\in(0,1]$ be the budget efficiency. Requiring the first term of \Cref{thm:resolution} to contribute at most a relative error $\epsilon$ gives
\[
\gstar=\frac{2C_{1}\sqrt{\pi}}{\epsilon\zeta}\sqrt{\frac{\Bstar\Lampar}{n\rho}}\;\asymp\;\cg\sqrt{\frac{\Bstar\,\Lampar/n}{\rho}},\qquad \rho=k/n,
\]
where $\Lampar$ is the mean density along $\partial\Rstar$. At $g=\gstar$ the second term is $O(\epsilon^{2}\zeta)$ and is therefore lower order, so the balance is self-consistent.
\end{corollary}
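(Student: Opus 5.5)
The plan is to specialise the first term of \Cref{thm:resolution} to a compact optimum made of $\Bstar$ near-circular blocks, replace $\rhomax$ by the boundary density $\Lampar$ using the random-offset averaging of \Cref{rem:remainder}, and solve the resulting relative-error inequality for $g$.

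\textbf{Step 1: perimeter.} I will use the disc calculation from the proof of \Cref{cor:blocks}. The reference-grid minimiser $\Rstar$ has total area $a$ split among $\Bstar$ near-circular blocks, so
\[
\per(\Sstar)\approx2\sqrt{\pi a\Bstar}.
\]
The cell-union correction of \Cref{rem:iso-role} changes this only by a constant factor. The area is linked to the budget through the boundary density: $\Sstar$ holds $k=n\rho$ variables, and the relevant density is $n$ times the normalised $\Lampar/n$. This gives $a\asymp k/\Lampar=n\rho/\Lampar$.

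\textbf{Step 2: relative error and solving for $g$.} With $\rhomax$ replaced by $\Lampar$ (the averaged form from \Cref{rem:remainder}), the first term of \Cref{thm:resolution} is $L\Lampar C_{1}\per(\Sstar)/g$. I require it to be at most $\epsilon\,\gain(\Sstar)=\epsilon\zeta Lk$. Substituting $\per(\Sstar)=2\sqrt{\pi a\Bstar}$ and $a$ from Step 1, $L$ cancels. Solving gives
\[
g\ge\frac{2C_{1}\sqrt{\pi}\,\Lampar\sqrt{a\Bstar}}{\epsilon\zeta k}.
\]
Writing $k=n\rho$ then puts this in the displayed form $\frac{2C_{1}\sqrt{\pi}}{\epsilon\zeta}\sqrt{\Bstar\Lampar/(n\rho)}$, up to the normalisation convention for $\Lampar$. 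This step is where I expect trouble: getting the powers of $\Lampar$ and $n$ to match the stated formula exactly depends on whether $\Lampar$ is a per-unit-area count or a density normalised by $n$. I will fix the normalisation so that $\Lampar\cdot a$ counts variables and absorb the leftover constants into $\cg$. The $\asymp$ form is then immediate.

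\textbf{Step 3: self-consistency.} At $g=\gstar$ the second term, relative to $\gain(\Sstar)=\zeta Lk$, is
\[
\frac{C_{2}\Lampar\Bstar}{g^{2}\zeta k}.
\]
Squaring the defining identity for $\gstar$ gives $1/\gstar^{2}=\epsilon^{2}\zeta^{2}n\rho/\bigl(4\pi C_{1}^{2}\Bstar\Lampar\bigr)$. Substituting this cancels $\Bstar$, $\Lampar$ and $k=n\rho$, leaving $C_{2}\epsilon^{2}\zeta/(4\pi C_{1}^{2})=O(\epsilon^{2}\zeta)$. Since $\epsilon\le1$ this is lower order than the first term's $\epsilon$, so the balance is self-consistent. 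The argument is conditional on the offset-averaged density substitution and on the near-circular block geometry, which are exactly the extra hypotheses flagged by the \conditional{} tag.
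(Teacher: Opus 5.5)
Your proposal is correct and matches the paper's proof: take the disc perimeter $2\sqrt{\pi a\Bstar}$ with area equal to $k$ divided by the density, divide the first term of \Cref{thm:resolution} by $\gain(\Sstar)=\zeta Lk$, set it equal to $\epsilon$, substitute $\Lampar$ and $k=n\rho$, and evaluate the second term at $\gstar$ to get $C_{2}\epsilon^{2}\zeta/(4\pi C_{1}^{2})$. The only cosmetic difference is that you insert $\Lampar$ from the start while the paper carries $\rhomax$ and substitutes at the end, and your Step~2 normalisation worry is unnecessary, since with $a=k/\Lampar$ one has $\Lampar\sqrt{a\Bstar}/k=\sqrt{\Bstar\Lampar/k}=\sqrt{\Bstar\Lampar/(n\rho)}$, so the displayed constant comes out exactly and nothing needs to be absorbed into $\cg$.
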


\begin{proof}
Dividing \Cref{thm:resolution} by $\gain(\Sstar)=\zeta Lk$ gives relative error $\rhomax(C_{1}\per/g+C_{2}\Bstar/g^{2})/(\zeta k)$. For a compact optimum whose minimiser has $\Bstar$ near-circular blocks holding $k$ variables at density $\rhomax$, the total area is $a=k/\rhomax$ and hence $\per=2\sqrt{\pi a\Bstar}=2\sqrt{\pi k\Bstar/\rhomax}$. The first term is then $2C_{1}\sqrt{\pi}\sqrt{\rhomax \Bstar/k}/(\zeta g)$, and setting it equal to $\epsilon$ and substituting $\rhomax\to\Lampar$ from \Cref{rem:remainder} and $k=\rho n$ gives the display. Substituting $g=\gstar$ into the second term gives $C_{2}\epsilon^{2}\zeta/(4C_{1}^{2}\pi)$.
\end{proof}

On clustered instances, using the global maximum density in place of $\Lampar$ increases the predicted $\gstar$ by a factor $\sqrt{\Lammax/\Lampar}$, so boundary density and maximum density lead to different resolution estimates under \Adens{}. Here $\Bstar$ counts boundary components of the minimiser in \Cref{def:per-of-set}, while $\Bpar$ in \Areg{} describes the regions available to the policy. They can coincide when the optimum is itself admissible. The theory specifies the scaling law $\sqrt{\Bstar/\rho}$. Its constant $\cg$ absorbs $\epsilon$, $\zeta$ and $C_{1}$, so fitting $\cg$ on one family is required before predicting on another. The multi-scale refinement in \Cref{app:foveation} remains relevant below the single-canvas resolution condition $g\gtrsim\cg\sqrt{\Bstar/\rho}$. At $g=128$ with $\Bstar=4$, the implied threshold on $\rho$ varies over three orders of magnitude across plausible values of $\cg$.

\begin{remark}[Representation and learning error]
\Cref{thm:resolution} bounds the representational loss of restricting to unions of cells, with $\hat S$ defined as the best representable set. A learned policy also incurs estimation and optimisation error. In a sweep over $g$, the output dimension grows as $g^{2}$ at a fixed sample count, so estimation error can cause empirical saturation to occur \emph{earlier} than the approximation bound predicts. The comparison between predicted and measured saturation reflects these combined errors.
\end{remark}

\begin{remark}[Relation to partition-based approximation]
Classical partitioning results for the Euclidean travelling salesman problem \citep{karp1977partitioning} solve each group independently and stitch the pieces together. Their error arises from discarding structure across group boundaries and decreases with larger groups. FOVEA solves the released subproblem within the fixed incumbent, preserving cross-boundary coupling through the fixed variables. Its analysis therefore concerns the loss from representing the destroy set on a grid; the stitching error in partition-based approximation belongs to a different construction.
\end{remark}

\subsection{From per-round cost to the primal integral}

We write $\lambda$ for the per-round contraction rate of the primal gap, reserving $\rho$ for the destroy fraction.

\begin{proposition}[Critical instance size, \conditional]
\label{prop:critical}
Suppose the first feasible solution appears at $t_{0}$ with gap $\gamma_{0}$, each round costs $\tau=\tsel+\trep$, and \Bcontr{} holds: $\Exp[\gamma_{t+1}\mid\gamma_{t}]\le(1-\lambda)\gamma_{t}$. Then with $N=\lfloor(T-t_{0})/\tau\rfloor$,
\[
\Exp[\PI(T)]\;\le\;t_{0}+\frac{\tau\gamma_{0}}{\lambda}\bigl(1-(1-\lambda)^{N+1}\bigr)\;\le\;t_{0}+\frac{\tau\gamma_{0}}{\lambda}.
\]
Compare a selector whose per-round cost is $c_{F}=c_{0}+c_{1}n/N$ against one of cost $c_{G}dn$, with quality ratio $r=\lambda_{G}/\lambda_{F}\ge1$. Here $c_{0}=\tnet+\tlp/\Rlp$ is independent of $n$ and $c_{1}=c_{\mathrm{layout}}+c_{\mathrm{raster}}$ is the one-off setup amortised over the $N$ rounds of the run. The two bounds coincide at
\[
\nstar=\frac{r\,c_{0}+(r-1)\trep}{c_{G}d-r\,c_{1}/N},
\qquad\text{defined only when}\quad c_{G}d>\frac{r\,c_{1}}{N}.
\]
\end{proposition}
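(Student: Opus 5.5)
The proof splits into two parts: a bound on the expected primal integral in terms of per-round cost and contraction, and an algebraic comparison of two such bounds. For the first part, decompose $[0,T]$ into the initial segment $[0,t_{0})$, where $p(t)=1$ and which contributes exactly $t_{0}$, followed by rounds of length $\tau$. During round $t$ (for $t=0,\dots,N$) the incumbent is held fixed, so $p(t)=\gamma_{t}$ on an interval of length at most $\tau$. Any partial final round is absorbed by letting the last interval have length at most $\tau$, and the gap is nonincreasing because only improving solutions are accepted. This gives
\[
\PI(T)\le t_{0}+\tau\sum_{t=0}^{N}\gamma_{t}.
\]
Iterating \Bcontr{} with the tower property yields $\Exp[\gamma_{t}]\le(1-\lambda)^{t}\gamma_{0}$. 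Summing the geometric series gives $\tau\gamma_{0}\bigl(1-(1-\lambda)^{N+1}\bigr)/\lambda$, and dropping the nonnegative subtracted term gives the second inequality. The only care needed is the bookkeeping of round boundaries against $T$, checking that $N+1$ intervals cover $[t_{0},T]$.

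For the crossover, I will use the asymptotic form $t_{0}+\tau\gamma_{0}/\lambda$ for both selectors with a shared $t_{0}$ and $\gamma_{0}$. Since the initialisation is shared, these cancel, and the two bounds coincide exactly when
\[
\frac{\tsel^{F}+\trep}{\lambda_{F}}=\frac{\tsel^{G}+\trep}{\lambda_{G}}.
\]
Multiplying through by $\lambda_{G}=r\lambda_{F}$ gives
\[
r\bigl(c_{0}+c_{1}n/N+\trep\bigr)=c_{G}dn+\trep.
\]
This is linear in $n$. Collecting the $n$ terms gives
\[
n\,(c_{G}d-rc_{1}/N)=rc_{0}+(r-1)\trep,
\]
and solving yields $\nstar$. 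The right side is positive because $r\ge1$, so a positive solution exists precisely when the coefficient $c_{G}d-rc_{1}/N$ is positive. This is the stated definedness condition; otherwise the graph-encoder bound stays below for every $n$.

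The main obstacle is interpretive rather than technical. $N$ itself depends on $\tau$ and hence on $n$, so treating $c_{1}/N$ as a constant makes the equation genuinely linear only if $N$ is regarded as fixed, for example evaluated at the operating point. I would state this explicitly, as the remark on factor-of-two precision (\Cref{rem:nstar}) anticipates. I would also note that the coincidence is between the upper bounds, not between the true expected integrals.
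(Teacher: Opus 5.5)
Your proposal is correct and follows the paper's proof essentially step for step. It uses the same split into the initial $t_{0}$ segment plus $N+1$ round intervals with a partial last one, the same geometric sum from iterating \Bcontr{}, and the same equation $r(c_{F}+\trep)=c_{G}dn+\trep$ from equating the infinite-horizon bounds, solved linearly in $n$; your caveats that $N$ must be held fixed and that the coincidence is between upper bounds match the spirit of \Cref{rem:amortised,rem:nstar}.
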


\begin{proof}
Before $t_{0}$ the integrand is $1$, contributing $t_{0}$. Afterwards the gap is constant within each round, so $\int_{t_{0}}^{T}p\le\tau\sum_{t=0}^{N-1}\gamma_{t}+\tau\gamma_{N}$. Taking expectations and iterating \Bcontr{} gives $\Exp\gamma_{t}\le q^{t}\gamma_{0}$ with $q=1-\lambda$, so the sum is at most
\[
\tau\gamma_{0}\Bigl(\frac{1-q^{N}}{\lambda}+q^{N}\Bigr)=\frac{\tau\gamma_{0}}{\lambda}\bigl(1-q^{N}(1-\lambda)\bigr)=\frac{\tau\gamma_{0}}{\lambda}\bigl(1-q^{N+1}\bigr).
\]
For the second claim, equate $(c_{F}+\trep)/\lambda_{F}$ with $(c_{G}dn+\trep)/\lambda_{G}$, which gives $r c_{F}+(r-1)\trep=c_{G}dn$. Substituting $c_{F}=c_{0}+c_{1}n/N$ puts $n$ on both sides, and collecting it yields $n(c_{G}d-rc_{1}/N)=rc_{0}+(r-1)\trep$ and hence the display.
\end{proof}

\begin{remark}[Amortised setup cost]
\label{rem:amortised}
FOVEA's per-round cost includes the amortised share $c_{1}n/N$ of a layout and a first rasterisation that are both $\Theta(n)$. This makes the cost affine in $n$. A positive $\nstar$ exists when the graph selector's slope $c_{G}d$ exceeds FOVEA's amortised slope $rc_{1}/N$. Short runs at very large $n$ can be dominated by setup and have no positive crossing. Reporting $c_{1}/N$ alongside $\nstar$ identifies this regime.
\end{remark}

\begin{remark}[Finite-horizon correction]
The geometric sum contributes $-\frac{\tau\gamma_{0}}{\lambda}(1-\lambda)^{N+1}$, which reduces the infinite-horizon bound by the unspent tail of the series. This finite-horizon correction is exact within the geometric bound.
\end{remark}

\begin{remark}[The plateau bounds the applicability]
\label{rem:plateau}
Real runs converge to some $\gamma_{\infty}>0$. Writing $\tilde\gamma_{t}=\gamma_{t}-\gamma_{\infty}$ and assuming \Bplat{}, the same argument gives $\Exp[\PI(T)]\le t_{0}+\gamma_{\infty}(T-t_{0})+\tau\tilde\gamma_{0}/\lambda$. The middle term is linear in $T$ and independent of both $\tau$ and $\lambda$. At long horizons, the plateau height therefore dominates the comparison and reflects neighbourhood quality. The transient dominates while $T-t_{0}\ll\tau\tilde\gamma_{0}/(\lambda\gamma_{\infty})$, defining the regime in which the selection-cost comparison of \Cref{prop:critical} is informative.
\end{remark}

\begin{remark}[How to read $\nstar$]
\label{rem:nstar}
$\nstar$ equates two upper bounds of different tightness and provides an order-of-magnitude estimate of the crossing. It depends on the quality deficit and encoder slope. The numerator contains two terms. $(r-1)\trep$ overtakes $rc_{0}$ once $r$ exceeds $1+c_{0}/\trep$, which on the constants of \Cref{tab:timing} is under two and a half percent. At the measured $r=1.03$, the terms are $0.044$ and $0.054$\,s and both contribute materially. Their sensitivities differ: halving the encoder slope doubles $\nstar$, whereas doubling the quality deficit $r-1$ raises it by about half because $rc_{0}$ remains. We report an interval that propagates measurement error in both quantities. Empirical consistency includes a measured crossing inside that interval, or a predicted crossing beyond the largest available instance accompanied by no observed crossing; the unknown tightness of the bounds limits the precision of either comparison.
\end{remark}

\subsection{Memory accounting}
\label{app:memory}

Memory accounting separates the network input, resident raster state and host-side index. The network reads a $128\times128\times12$ single-precision tensor occupying $786$\,kB. The rasteriser's resident state is four times that size: it holds the grid in double precision together with the per-cell sums and counts used by incremental updates. Both reside on the device and are constant in $n$. A graph encoder stores the incidence structure and its activations on the device, requiring $O(\nnz(A))$ memory. \Cref{fig:scaling}(c) reports device memory for these selectors; on this family the graph baselines exceed the $40$\,GB budget between $n=3\times10^{4}$ and $n=10^{5}$.

FOVEA also stores three $O(n)$ arrays on the host: the layout, with two 32-bit coordinates per variable; each variable's cell identifier; and an inverted index from cells to variables, with one 32-bit identifier per variable and a table of $HW$ offsets. These arrays occupy sixteen bytes per variable, or $16$\,MB at $n=10^{6}$, shown by the dotted line in \Cref{fig:scaling}(c). Per-variable channel quantities are streamed from the solver at each refresh, keeping resident host storage at sixteen bytes per variable. The representation thus places the instance-sized arrays in host memory while keeping the network input independent of $n$.

\section{Method details}
\label{app:method}

\subsection{The budget regime}
\label{app:regime}

Holding $k$ fixed in absolute terms defines the scaling regime. Under the locality conditions of \Cref{sec:analysis}, the geometric coupling-advantage bound scales as $\sqrt{sk}/\kappa$ and remains constant in $n$ on families with constant density parameters; \Cref{prop:expander} characterises a contrasting regime. When $n\gg ks$, the destroy fraction $\rho=k/n$ lies below the predicted peak location $\rhostar=1/s$ of the coupling-advantage envelope. These instances occupy the branch that rises with $\rho$, where the geometric factor is $\sqrt{sk}/\kappa$. We evaluate the empirical trend motivated by \Cref{cor:rhostar} at $n\le10^{4}$. By \Cref{cor:gstar}, a small $\rho$ also raises the saturation resolution $\gstar$, motivating the multi-scale extension in \Cref{app:foveation}. Every figure with $n$ on the horizontal axis states that $k$ is fixed, so the scaling curves describe a common absolute repair budget.

\subsection{Entity layers}

\Cref{tab:entities} instantiates the layer of \Cref{sec:entity} for the four families. Two placements distribute mass across cells, so \eqref{eq:entity-decode} is written with a mass threshold: a variable of a general mixed integer program occupies one column and as many rows as it has constraint blocks, and a scheduling operation occupies a horizontal run proportional to its duration.

\begin{table}[h]
\caption{The entity layer per family. \emph{Arity} is $|\ent(i)|$, the number of entities a variable refers to; the layer is vacuous exactly when the arity is one and the association is the identity.}
\label{tab:entities}
\centering
\small
\begin{tabular}{@{}lllll@{}}
\toprule
Family & Entity & Placement & Arity & Released set $S$ \\
\midrule
MILP & variable & column $=$ reordered column rank, & $1$ & variables whose mass in \\
     &          & rows $=$ its constraint blocks &     & the region reaches $\vartheta$ \\[2pt]
CVRP & customer & native coordinates, one-hot & $2$ & candidate edges with both \\
     &          &                             &     & endpoints selected \\[2pt]
JSSP & operation & (interval midpoint, machine by & $2$ & start times of the operations \\
     &           & load), spread over the duration &     & under the mask, plus the \\
     &           &                                 &     & ordering variables they induce \\[2pt]
1D-BPP & (item, bin) pair & (item rank by size, bin rank & $1$ & assignment variables \\
       &                  & by residual capacity), one-hot &  & under the mask \\
\bottomrule
\end{tabular}
\end{table}

Scheduling uses the same two-level structure as routing. A disjunctive formulation carries an ordering variable $z_{o,o'}$ for each pair of operations competing for a machine, so a variable refers to two entities and the arity is two. Releasing the ordering variables induced by the selected operations allows the sub-solver to change their machine sequence along with their start times. Operations correspond to customers in routing, and ordering variables to edges. The scheduling layout is rebuilt whenever the incumbent changes because an operation's time coordinate depends on the current solution; it is the one layout computed more than once. Packing uses artificial axes, with bins ordered by residual capacity so that adjacent positions have comparable slack. Variables without spatial meaning, such as bin-used indicators, are marked non-spatial and handled by the sub-solver.

\subsection{Channels}

\begin{table}[h]
\caption{The twelve input channels. \emph{Class} determines refresh cost: for a fixed layout, static channels are computed once; state channels are updated incrementally, and dual channels require a linear programming solve and are refreshed every $\Rlp$ rounds. HGS and CP-SAT provide signals for eight channels on routing and scheduling; an availability mask identifies the remaining relaxation-dependent channels.}
\label{tab:channels}
\centering
\small
\begin{tabular}{@{}rllll@{}}
\toprule
\# & Channel & Class & Aggregation & Semantics \\
\midrule
1  & occupancy & static & count & placement mass of the cell \\
2  & solution & state & mean & incumbent variable values \\
3  & lp\_relaxation & dual & mean & relaxation values \\
4  & integrality\_gap & dual & mean & $|\inc_{i}-\tilde x_{i}|$; the RINS signal and the default score \\
5  & reduced\_cost & dual & mean & variable reduced costs \\
6  & objective\_coeff & static & mean & coefficients of the objective \\
7  & pseudocost & dual & mean & the solver's own difficulty estimate \\
8  & tightness & state & mean & mass of tight constraints attributed to $i$ \\
9  & improvement\_history & state & mean & gain when $i$ was last released \\
10 & recency & state & mean & rounds since $i$ was last released \\
11 & type\_mask & static & mean & binary $1$, integer $1/2$, continuous $0$ \\
12 & boundary & static & mean & share of $i$'s constraints reaching far across the canvas \\
\bottomrule
\end{tabular}
\end{table}

The input tensor has twelve channels for every family. Channels $3$, $4$, $5$ and $7$ carry relaxation-dependent signals only when the sub-solver is a mixed integer programming solver. General programs and packing therefore provide signals for all twelve channels, while routing and scheduling provide eight; the remaining four use neutral values and are marked unavailable. The channel-removal ablation design uses the five groups geometry $\{1,12\}$, solution $\{2,8\}$, relaxation $\{3,4,5,7\}$, history $\{9,10\}$ and entity $\{6,11\}$.

\paragraph{Normalisation.}
The channel transforms and within-instance statistics are specified in \Cref{sec:canvas}.

\subsection{Sampling}

$\Sample$ takes the heatmap $\heat\in[0,1]^{H\times W}$, the budget $k$, an oversample factor $\beta\ge1$ and a Gumbel scale $\varepsilon$, and returns a cell set. It perturbs each cell to $\log\heat_{pq}+\varepsilon g_{pq}$ with independent $g_{pq}\sim\mathrm{Gumbel}(0,1)$, then walks the cells in decreasing perturbed order, admitting them until the admitted cells hold $\beta k$ variables or the cells with $\heat_{pq}>0$ run out. A morphological closing follows, and then the singleton components are discarded.

At unit Gumbel scale, taking cells in that order samples without replacement with probability proportional to $\heat$ \citep{kool2019stochasticbeam}; varying $\varepsilon$ adjusts the concentration of the sampling distribution. When the positively responding cells run out before $\beta k$ is reached, the round uses the available cells and a smaller budget. The morphological steps control different features: closing joins components separated by a single-cell gap, and singleton pruning removes isolated cells. Both contribute to controlling the boundary-component count $\Bpar$ in \Cref{cor:blocks}.

\subsection{Truncating a region to the budget}
\label{app:truncation}

The default truncation rule preserves cell structure within the budget. It selects whole cells in decreasing cell score, then uses the per-variable score to fill the remaining budget from the next cell. The score-only alternative takes the $k$ highest-scoring variables across the sampled region and can distribute the selection across its cells. The latency accounting assumes this alternative. The two rules tend to agree for sharply peaked heatmaps and differ for flatter ones, where the perimeter penalty has greater influence. The ablation design compares both rules with a cell-level rule that omits per-variable scoring to assess the resulting selection sets. The bound in \Cref{prop:iso} applies to full region preimages, as specified in \Cref{sec:policy}.

\subsection{Multi-scale refinement}
\label{app:foveation}

When a single cell holds thousands of variables, per-variable scoring supplies much of the selection detail. The optional refinement stage re-rasterises the bounding box of $\Psel$ at the same $H\times W$ resolution and applies $\pol$ to that crop, increasing spatial detail within the selected region. The second heatmap replaces the first inside the box, and truncation proceeds as before. Refinement adds a forward pass at fixed resolution and a rasterisation whose cost depends on the variables in the crop. \Cref{cor:gstar} describes the resolution threshold up to the constant $\cg$, which is fitted on one family for the refinement ablation.

\subsection{Training}

The primary training signal imitates a local branching oracle. At each large neighborhood search step, the oracle solves an auxiliary program to find the best assignment within a Hamming ball around the incumbent. Positive labels cover only variables changed by the sub-solve, as described in \Cref{sec:incremental}. Because oracle calls are expensive, we supplement them with hindsight relabelling of traces collected under diverse behaviour policies. Each round's realised improvement is attributed to the released cells, and the network is trained with weighted binary cross-entropy against the resulting targets.

The optional reinforcement stage maximises improvement per unit time, accounting for the repair cost of larger neighborhoods. It uses REINFORCE with a leave-one-out baseline formed from $G$ masks sampled at the same state. This unbiased baseline uses the sampled returns directly and supports the shared policy across instance scales without an additional value network.

\subsection{Budget and locality thresholds for the degree-of-freedom certificate}
\label{app:rank-substitution}

Substituting representative parameters into \Cref{prop:iso} gives the budget and locality thresholds at which the certificate's sufficient condition holds. Consider a formulation in which half the rows are equalities, as in the degree constraints of a routing model and the item-assignment rows of a packing model, with $d=5$ constraints per variable, mean support $s=10$, equality fraction $\thetaE=0.5$, density parameter $\kappa=1$ and locality constant $\cdelta=1$. Pure covering or packing programs have $\thetaE=0$; the tight-inequality variant in \Cref{rem:tight} provides a heuristic for these cases.

A random destroy set of size $k$ meets about $dk$ constraints, of which roughly $\thetaE dk=2.5k$ are equalities. This estimate exceeds $k$ throughout the budget range, leaving the rank to be measured directly at these parameters.

For a single compact block, the equality rows meeting $S$ are a $\thetaE$ share of all rows meeting it, so the count is $\thetaE$ times the sum of two terms: the cut bound of \Cref{prop:iso}, and the rows lying wholly inside. With one boundary component and area $k/n$ the perimeter is $2\sqrt{\pi k/n}$, so the geometric term is $9\kappa m\delta\per=18\sqrt{\pi}\,\cdelta\kappa\,d\sqrt{k/s}\approx31.9\,d\sqrt{k/s}$, the second term is $9\kappa\cdelta^{2}d=45$, and the rows lying wholly inside number about $dk/s$. Multiplying by $\thetaE$,
\[
\text{rows}\;\approx\;0.5\bigl(31.9\,d\sqrt{k/s}+9d+dk/s\bigr)\;=\;79.8\sqrt{k/10}+22.5+0.25k .
\]
At $k=100$, the estimated count is about $300$, above the certificate threshold. The condition $\text{rows}<k$ first holds at $k\approx1.2\times10^{3}$; at $k=100$, it also holds once $\cdelta\lesssim0.3$. The reported runs use $k\le10^{3}$, placing them in the latter regime: the layout's measured $\cdelta$ falls into that range on the families whose formulations contain equality rows. Budgets in the low thousands offer another route to the threshold and extend beyond the range evaluated here.

The certificate therefore identifies a threshold governed by budget and locality. In this example, the random row count is linear in $k$ with a coefficient above one. The compact count combines a $\sqrt{k}$ boundary term with a smaller linear interior term and eventually falls below the budget. A row count below $k$ certifies rank deficiency for the compact set. For random sets, a count above the budget leaves rank undetermined, so assessing their degrees of freedom requires measuring the rank itself.

\section{Experimental protocol in full}
\label{app:experiments}

\subsection{Instances}

\begin{table}[h]
\caption{Evaluation splits. \emph{Size} denotes customers for routing, variables for mixed integer programs and operations for scheduling. \emph{Seen} records whether the family appears in training; the two held-out families test zero-shot transfer of shared weights. \citet{huang2023cllns} publish four synthetic families at $1{,}000$ to $6{,}000$ variables. We generate all four at $10^{4}$ to align the main table with a point in the scaling sweep; this size difference is part of the protocol distinction from \Cref{tab:milp}. Routing training reaches $2000$ customers. XL, AGS and the million-scale sets test size extrapolation, while CVRPLIB X evaluates the training size range. The combinatorial auction family uses CATS \citep{leyton2000cats} with the parameters of \citet{huang2023cllns}.}
\label{tab:splits}
\centering
\small
\begin{tabular}{@{}lllrl@{}}
\toprule
Family & Split & Size & Count & Seen \\
\midrule
CVRP & CVRPLIB X \citep{uchoa2017cvrplib} & $100$--$1000$ & $100$ & \\
CVRP & CVRPLIB XL \citep{queiroga2026xl} & $10^{3}$--$10^{4}$ & $100$ & \\
CVRP & \citet{arnold2019ags} & $3\times10^{3}$--$3\times10^{4}$ & $10$ & \\
CVRP & \citet{accorsi2024filo2} & $2\times10^{4}$--$10^{6}$ & $20$ & \\
CVRP & clustered, distribution shift & $5000$ & $32$ & \\
CVRP & generated training set & $N\le2000$ & $256$ & yes \\
\midrule
MILP & SC, MIS, CA, MVC after \citet{huang2023cllns} & $10^{4}$ & $400$ & yes \\
MILP & the same at twice the size & $2\times10^{4}$ & $400$ & yes \\
MILP & scaling tiers, SC and CA & $10^{5}$, $10^{6}$ & $80$ & \\
MILP & MIPLIB 2017 \citep{gleixner2021miplib} & mixed & $240$ & \\
MILP & banded (staircase) programs & banded & $20$ & \\
MILP & expander set cover (falsification) & $10^{3}$, $10^{4}$ & $40$ & \\
\midrule
JSSP & Taillard \citep{taillard1993benchmarks} & $\le2000$ & $80$ & no \\
JSSP & Large-TA \citep{dacol2022jssp} & $\le10^{5}$ & --- & no \\
\midrule
1D-BPP & BPPLIB \citep{delorme2018bpplib} & mixed & $220$ & no \\
\bottomrule
\end{tabular}
\end{table}

The four mixed integer families use the generators of \citet{huang2023cllns}: Barabási--Albert graphs for vertex cover, after \citet{song2020generallns}; Erdős--Rényi graphs for independent set; CATS arbitrary relations for auctions \citep{leyton2000cats}; and the set covering construction of \citet{wu2021rllns}. Our implementations assemble the constraint matrix directly in sparse form for the $10^{6}$ variable tier. Generator parameters follow the source papers, with the variable count adjusted to each size tier.

The scheduling transfer set reaches $10^{5}$ operations under the selected CP-SAT sub-solver. \citet{dacol2022jssp} construct instances with up to $10^{6}$ operations and report that CP-SAT returns no feasible solution on that group within six hours. An incumbent is required to initialise our LNS pipeline, so this group falls outside the evaluation with CP-SAT. The same study reports that CP Optimizer finds solutions for six of the ten million-operation instances on one core and all ten on four cores, indicating a route to extending the evaluation with another sub-solver. Taillard's set reaches $2000$ operations; Large-TA supplies the larger instances.

\subsection{Baselines}

\begin{table}[h]
\caption{The seventeen baselines grouped by the aspect of performance they assess: solver configuration, destroy policy, encoder representation, domain performance and the contribution of learning to a spatial prior.}
\label{tab:baselines}
\centering
\small
\begin{tabular}{@{}lp{0.30\textwidth}p{0.36\textwidth}@{}}
\toprule
Layer & Methods & Comparison target \\
\midrule
Solver defaults & SCIP default, SCIP aggressive \citep{bestuzheva2021scip} & the contribution of the LNS loop \\
Handcrafted destroy & random, RINS, local branching, LB-RELAX, adaptive portfolio & learned and fixed destroy rules \\
Learned destroy & CL-LNS, IL-LNS, L2D ($k{=}5$, $k{=}10$) & canvas and graph encoders \\
Domain solvers & HGS \citep{vidal2022hgs}, LKH-3 \citep{helsgaun2017lkh3}, FILO2 \citep{accorsi2024filo2}, CP-SAT \citep{ortools} & performance relative to domain solvers \\
Spatial control & region-growing on the same canvas & \textbf{learning's contribution to the spatial prior} \\
\bottomrule
\end{tabular}
\end{table}

Learning to Delegate is evaluated at both subproblem sizes, $k=5$ and $k=10$, which define different action spaces. FILO2 is listed at both published effort levels: the long configuration uses a larger iteration budget and achieves a smaller gap (\Cref{tab:cvrp-x}). \Cref{tab:cvrp-main} includes the short configuration; the long configuration's published iteration budget exceeds the $200$\,s deadline at every tier. FOVEA and the spatial control use identical cached layouts, accessed through the same keys and checked by matching digests in the evaluation harness.

\subsection{Additional results}

\begin{table}[h]
\caption{All four synthetic families at $10^{4}$ variables and at twice that, primal integral over a $200$\,s compute-charged horizon. Best in each column in bold, ours shaded. The two families the main text reports are repeated here so that the four can be read together. The pattern on vertex cover and independent set matches the one on set cover and auctions: CL-LNS is ahead at the smaller tier and the margin is small.}
\label{tab:milp-full}
\centering
\small
% Generated table body. Edit the results data, not this file.
\begin{tabular}{@{}lcc@{\hspace{1.15em}}cc@{\hspace{1.15em}}cc@{\hspace{1.15em}}cc@{}}
\toprule
Method & \multicolumn{2}{c}{MVC} & \multicolumn{2}{c}{MIS} & \multicolumn{2}{c}{CA} & \multicolumn{2}{c}{SC} \\
\cmidrule(lr){2-3}\cmidrule(lr){4-5}\cmidrule(lr){6-7}\cmidrule(lr){8-9}
& $1\times$ & $2\times$ & $1\times$ & $2\times$ & $1\times$ & $2\times$ & $1\times$ & $2\times$ \\
\midrule
SCIP, default & 39.7 & 57.2 & 71.4 & 93.8 & 74.9 & 96.4 & 58.4 & 79.3 \\
SCIP, aggressive & 32.1 & 48.6 & 60.3 & 82.5 & 63.1 & 84.7 & 46.2 & 66.8 \\
Random LNS & 18.4 & 26.9 & 15.2 & 22.7 & 68.2 & 82.6 & 34.7 & 45.1 \\
RINS & 16.8 & 24.3 & 14.6 & 21.1 & 59.4 & 73.5 & 31.9 & 42.3 \\
Local branching & 44.5 & 63.1 & 52.8 & 74.6 & 88.3 & 104.2 & 62.8 & 81.5 \\
LB-RELAX & 14.9 & 21.7 & 13.1 & 19.4 & 52.7 & 66.9 & 28.6 & 37.4 \\
IL-LNS & 12.3 & 18.9 & 11.7 & 17.2 & 47.6 & 60.8 & 25.3 & 33.8 \\
CL-LNS & \textbf{9.8} & 15.4 & \textbf{9.4} & 14.8 & \textbf{41.2} & 54.2 & \textbf{21.4} & 28.9 \\
\midrule
Region growing & 11.6 & 16.2 & 10.9 & 15.3 & 49.8 & 61.3 & 24.1 & 30.2 \\
\rowcolor{gray!12}FOVEA & 10.7 & \textbf{14.8} & 10.2 & \textbf{13.9} & 43.5 & \textbf{51.7} & 22.0 & \textbf{27.1} \\
\bottomrule
\end{tabular}

\end{table}

\begin{table}[h]
\caption{MIPLIB 2017 under the present protocol: $300$\,s, $5$\,s per re-optimisation, SCIP 8.0.1, single threaded, three seeds. \emph{Gap $<1\%$} counts instances reaching that gap within the horizon. Among the methods rerun under this protocol, FOVEA has the lowest mean gap and primal integral and the largest number of instances with gap below $1\%$. For context, \citet{yuan2025btbslns} report a smaller gap of $1.75\%$ for BTBS-LNS on the same instance set under the solver version and thread count listed in \Cref{tab:miplib}. BTBS-LNS was not rerun here, so the two protocols provide separate reference points.}
\label{tab:miplib-ours}
\centering
\small
% Generated table body. Edit the results data, not this file.
\begin{tabular}{@{}lccc@{}}
\toprule
Method & \multicolumn{3}{c}{240 instances} \\
\cmidrule(lr){2-4}
& mean gap \% & gap $<1\%$ & $\PI$ \\
\midrule
SCIP, default & 14.2 & 71 & 194.6 \\
Random LNS & 15.1 & 68 & 201.3 \\
RINS & 13.8 & 74 & 188.4 \\
LB-RELAX & 11.4 & 82 & 171.9 \\
CL-LNS & 8.9 & 96 & 152.7 \\
\midrule
Region growing & 9.7 & 91 & 158.2 \\
\rowcolor{gray!12}FOVEA & \textbf{8.4} & \textbf{99} & \textbf{148.3} \\
\bottomrule
\end{tabular}

\end{table}

\begin{table}[h]
\caption{Constants used in \Cref{prop:critical} for the set-cover estimate in \Cref{fig:scaling}(b). The encoder slope $c_{G}d$ is fitted at the four sizes where CL-LNS runs and is also plotted in that panel. The quality ratio $r$ is obtained from compute-matched runs at $n=10^{4}$ with equal round counts. FOVEA's amortised slope is nearly five hundred times smaller than the encoder slope; substituting the constants satisfies the side condition of \Cref{prop:critical} and yields a predicted crossing.}
\label{tab:timing}
\centering
\small
% Generated table body. Edit the results data, not this file.
\begin{tabular}{@{}llll@{}}
\toprule
Symbol & Meaning & Value & Source \\
\midrule
$\tnet$ & one forward pass, one CPU core & $31$\,ms & measured \\
$\tlp/\Rlp$ & amortised relaxation refresh & $12$\,ms & measured at $\Rlp=20$ \\
$c_{\mathrm{layout}}$ & reordering, per variable & $0.9\,\mu$s & measured \\
$c_{\mathrm{raster}}$ & first rasterisation, per variable & $0.4\,\mu$s & measured \\
$\trep$ & one sub-solve & $1.8$\,s & measured \\
$c_{G}d$ & graph encoder, per variable & $5.9\times10^{-6}$\,s & fitted, \Cref{fig:scaling}(a) \\
$N$ & rounds in a run & $106$ & $\lfloor(T-t_{0})/\tau\rfloor$ \\
$c_{1}/N$ & FOVEA's own amortised slope & $1.2\times10^{-8}$\,s & derived, \Cref{rem:amortised} \\
$r$ & quality ratio $\lambda_{G}/\lambda_{F}$ & $1.03$ & compute-matched at $n=10^{4}$ \\
\bottomrule
\end{tabular}

\end{table}

\begin{figure}[h]
\centering
\includegraphics{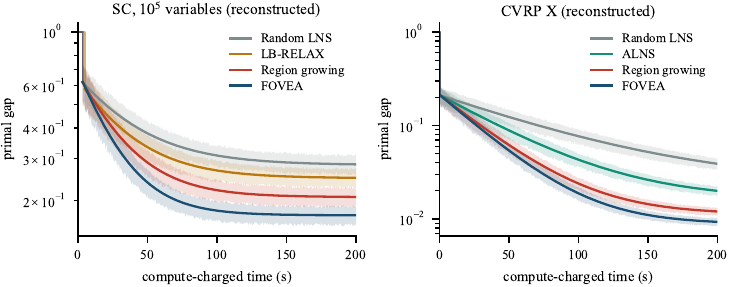}
\caption{Illustrative primal-gap curves against compute-charged time. Exponential curves are fitted so that their areas match the aggregate primal integrals in \Cref{tab:milp-main} and \Cref{tab:cvrp-main}. The shaded bands are generated illustrations, not empirical interquartile ranges over seeds. Each reconstructed curve includes an initial $p(t)=1$ segment of duration $t_{0}$, which is longer in the set-cover panel and shorter than one second in the routing panel. This segment contributes $t_{0}$ to the integral, as defined in \Cref{sec:prelim}.}
\label{fig:convergence}
\end{figure}

\begin{figure}[h]
\centering
\includegraphics{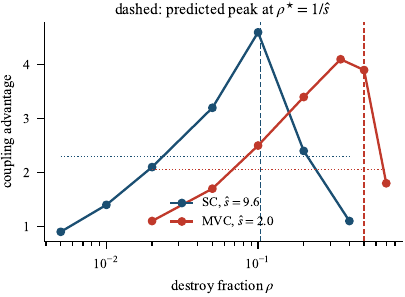}
\caption{A destroy-fraction sweep motivated by the coupling envelope in \Cref{cor:rhostar}. The empirical criterion is a peak exceeding both tails twofold, marked by the dotted lines; both plotted families meet it. On set cover, $\hat s=9.6$ places the predicted peak $\rhostar\approx0.1$ within the proved range $\rho\le1/4$. Vertex cover provides a descriptive comparison outside that range: each row has support two, so its prediction $\rhostar=0.5$ falls outside \Cref{lem:random}. Its observed peak is one grid point away, at $\rho=0.35$, within the reported seed spread. The sweep uses $n\le10^{4}$, where $\rhostar$ is reachable; at larger scales, the fixed absolute budget places runs on the ascending branch (\Cref{app:regime}).}
\label{fig:rho}
\end{figure}

\subsection{Published reference points}
\label{app:cvrp-refs}

The following tables provide published reference points under their source protocols. Their captions specify the horizons, hardware and reference values needed to interpret them alongside the matched comparisons in the main text.

\begin{table}[h]
\caption{Synthetic mixed integer programs as published by \citet{huang2023cllns}: $60$\,min horizon, $2$\,min sub-solve, SCIP 8.0.1, $100$ instances per family, mean primal integral $\pm$ standard deviation, lower is better. \Cref{tab:milp-full} gives our separate $200$\,s protocol at $10^{4}$ variables. The methods shown are those reported by the source, which provides no RINS or Gurobi results for these families. \textsc{graph} denotes its reimplementation of the learned partitioning operator of \citet{song2020generallns}. $^{\dagger}$Local branching is quoted from the source appendix and uses a $10$\,min sub-solve, compared with $2$\,min for the other rows.}
\label{tab:milp}
\centering
\small
\begin{tabular}{lcccc}
\toprule
Method & MVC-S & MIS-S & CA-S & SC-S \\
\midrule
\multicolumn{5}{l}{\emph{As published \citep{huang2023cllns}, $60$\,min, SCIP 8.0.1}}\\
Branch and bound (SCIP)   & $66.1\pm13.1$ & $222.8\pm25.9$ & $137.4\pm25.9$ & $86.7\pm37.9$ \\
Random LNS                & $38.0\pm44.8$ & $22.1\pm5.0$   & $235.6\pm34.9$ & $124.3\pm45.4$ \\
\textsc{graph}            & $42.9\pm44.0$ & $31.8\pm5.0$   & $277.7\pm36.5$ & $337.8\pm76.4$ \\
LB-RELAX                  & $57.0\pm51.2$ & $46.9\pm6.5$   & $140.5\pm18.3$ & $63.2\pm31.6$ \\
IL-LNS                    & $19.2\pm10.2$ & $19.4\pm5.8$   & $90.0\pm20.8$  & $63.2\pm34.3$ \\
RL-LNS                    & $29.6\pm11.5$ & $17.2\pm5.2$   & $249.2\pm35.9$ & $77.8\pm28.9$ \\
CL-LNS                    & $8.7\pm6.7$   & $12.8\pm5.4$   & $50.7\pm22.7$  & $26.2\pm12.8$ \\
Local branching$^{\dagger}$ & $102.2\pm35.9$ & $130.9\pm13.6$ & $272.1\pm26.9$ & $113.7\pm35.2$ \\
\bottomrule
\end{tabular}
\end{table}

\begin{table}[h]
\caption{MIPLIB 2017, $240$ instances, mean primal gap in percent, as published by \citet{yuan2025btbslns}: $300$\,s horizon, $5$\,s per re-optimisation, SCIP 7.0.3 with Gurobi 9.5.0 as the commercial reference, four cores, three seeds. The second and third columns give SCIP at longer horizons. BTBS denotes the full method; BTBS-F uses the earlier bound-tightening scheme in place of binarised tightening. Gaps use the best solution found by any compared method as the reference. The absolute gaps shown imply a relative improvement of $11.6\%$ between the two corresponding cells, compared with ten percent in the source abstract. Our evaluation of this instance set follows the separate protocol in \Cref{tab:miplib-ours}.}
\label{tab:miplib}
\centering
\small
\begin{tabular}{lccccccccc}
\toprule
& SCIP & \,$600$\,s & \,$900$\,s & U-LNS & R-LNS & FT-LNS & BTBS & BTBS-F & Gurobi \\
\midrule
As published & $15.15$ & $11.08$ & $8.79$ & $16.26$ & $15.94$ & $13.07$ & $1.75$ & $3.11$ & $1.98$ \\
\bottomrule
\end{tabular}
\end{table}

Published routing results span two size ranges and two evaluation protocols, shown separately below with their original settings. Our $200$\,s runs appear in \Cref{tab:cvrp-main}.

\begin{table}[h]
\caption{CVRPLIB X, $100$ instances with $100$ to $1000$ customers, mean percentage gap to the best known solution (lower is better). The upper block quotes \citet{vidal2022hgs}: a $2.4N$ second time limit, single threaded on an Intel Gold 6148 at $2.4$\,GHz, averages over ten runs and best known values from November 2020. That paper provides the original references for the competing heuristics; $^{\dagger}$SISR denotes its reimplementation. The lower block quotes \citet{accorsi2024filo2} on the same instances with an iteration budget, using an AMD Ryzen 5 PRO 4650GE at $3.3$\,GHz. FILO's reported gaps differ by $0.17$ points across these protocols; comparisons should be made within each block.}
\label{tab:cvrp-x}
\centering
\small
\begin{tabular}{@{}llcc@{}}
\toprule
Method & Protocol & Gap (\%) & Time (s) \\
\midrule
\multicolumn{4}{l}{\emph{As published \citep{vidal2022hgs}, $2.4N$\,s, ten runs}}\\
HGS-CVRP        & time limit, $2.4N$\,s & $0.11$ & \\
SISR$^{\dagger}$ & time limit, $2.4N$\,s & $0.19$ & \\
FILO            & time limit, $2.4N$\,s & $0.20$ & \\
HGS-2012        & time limit, $2.4N$\,s & $0.21$ & \\
KGLS            & time limit, $2.4N$\,s & $0.53$ & \\
HILS            & time limit, $2.4N$\,s & $0.66$ & \\
LKH-3           & time limit, $2.4N$\,s & $1.00$ & \\
OR-Tools        & time limit, $2.4N$\,s & $4.01$ & \\
\midrule
\multicolumn{4}{l}{\emph{As published \citep{accorsi2024filo2}, iteration budget, ten runs}}\\
FILO            & $10^{5}$ iterations & $0.37$ & $75$ \\
FILO2           & $10^{5}$ iterations & $0.34$ & $76$ \\
FILO (long)     & $10^{6}$ iterations & $0.22$ & $786$ \\
FILO2 (long)    & $10^{6}$ iterations & $0.20$ & $801$ \\
\bottomrule
\end{tabular}
\end{table}

\begin{table}[h]
\caption{Large routing instances, mean percentage gap and mean wall clock, as published by \citet{accorsi2024filo2}. Set B contains the ten instances of \citet{arnold2019ags}, with $3\times10^{3}$ to $3\times10^{4}$ customers. Set I contains the twenty instances introduced by \citet{accorsi2024filo2}, with $2\times10^{4}$ to $10^{6}$ customers. Set B gaps use CVRPLIB best known values. Set I gaps use the authors' best solutions from their experiments and therefore quantify distance to those solver-generated reference values.}
\label{tab:cvrp-xl}
\centering
\small
\begin{tabular}{@{}llcc@{}}
\toprule
Set & Method & Gap (\%) & Time (s) \\
\midrule
B, $10$ instances & FILO2 & $1.08$ & $121$ \\
B, $10$ instances & FILO2 (long) & $0.37$ & $1371$ \\
I, $20$ instances & FILO2 & $0.70$ & $370$ \\
I, $20$ instances & FILO2 (long) & $0.30$ & $3474$ \\
Lazio, $10^{6}$ customers & FILO2 & $0.40$ & $532$ \\
Lazio, $10^{6}$ customers & FILO2 (long) & $0.14$ & $3938$ \\
\bottomrule
\end{tabular}
\end{table}

Routing gaps are reported with the date of the best-known-value snapshot; the snapshots in the two tables are four years apart. Since set I uses solver-generated reference values, \Cref{sec:experiments} compares objective ratios against FILO2 at an equal deadline. The largest case is one million-customer instance, so the result at $10^{6}$ describes that individual instance.

\subsection{Empirical checks motivated by the analysis}

\begin{table}[h]
\caption{Analytical statements, associated empirical criteria and corresponding result locations. Trend and peak criteria test empirical hypotheses motivated by the bounds. The design corollaries \Cref{cor:blocks}, \Cref{cor:dimension} and \Cref{cor:gstar} determine ablation ranges in \Cref{sec:analysis}. The rank panel is illustrative; empirical evaluation of its criterion requires per-round rank and improvement logs.}
\label{tab:predictions}
\centering
\small
\begin{tabular}{@{}p{0.13\textwidth}p{0.33\textwidth}p{0.33\textwidth}p{0.10\textwidth}@{}}
\toprule
Statement & Measurement & Empirical criterion & Result \\
\midrule
\Cref{prop:iso} & locality defect $\hat\eta$ against coupling advantage, within a family & explicit cut bound holds for admissible full preimages under the stated hypotheses & \Cref{fig:rank} \\
\Cref{cor:rhostar} & advantage as a function of $\rho$ at $n\le10^{4}$, on a family with $\hat s\ge4$ & empirical peak near $\rho\approx1/\hat s$ exceeds both tails twofold & \Cref{fig:rho} \\
\Cref{prop:rank} & per-round $\rank(A_{E(S),S})/k$ against realised improvement, on the families with equality rows & all full-rank rounds have zero improvement & \Cref{fig:rank} \\
\Cref{prop:expander} & advantage over random on the expander family & coupling ratio at most $(1-2\epsilon)^{-1}$ under the hypotheses, $\epsilon<1/2$ & \Cref{tab:layout} \\
\Cref{prop:critical} & contraction rate and selection latency, giving a predicted critical size with an interval & agreement with the predicted interval, as an approximate cost-model check & \Cref{fig:scaling} \\
\bottomrule
\end{tabular}
\end{table}

\subsection{Ablations}

\begin{table}[h]
\caption{Ablation design: thirty-five studies and $117$ arms in total. \emph{Arms} counts configurations within a study, including the default. The specified seed counts are five for studies supporting main claims and three or one for the others. Numerical results supplied with the manuscript cover the five studies in \Cref{tab:ablation-results}; the remaining rows specify the broader evaluation design.}
\label{tab:ablations}
\centering
\scriptsize
\begin{tabular}{@{}llccp{0.44\textwidth}@{}}
\toprule
& Study & Arms & Seeds & Sweep \\
\midrule
\multicolumn{5}{l}{\emph{A. Layout and canvas}}\\
& A1 & $4$ & $5$ & reordering: Cuthill--McKee, spectral, random-index control, original-index control \\
& A2 & $5$ & $5$ & resolution $g\in\{32,64,128,256,512\}$ \\
& A3 & $6$ & $5$ & drop each of the five channel groups in turn, against the full stack \\
& A4 & $4$ & $3$ & cell aggregation: mean, sum, max, softmax \\
& A5 & $2$ & $3$ & twelve channels against a single score channel \\
& A6 & $2$ & $3$ & the $1\times HW$ strip with a 1-D U-Net, per \Cref{cor:dimension} \\
& A7 & $2$ & $5$ & fixed $g=128$ against $g\propto\sqrt{n}$ after \citet{gimf2025}, over four size tiers \\
& A8 & $3$ & $5$ & normalisation: per-instance, one global scale, none \\
\midrule
\multicolumn{5}{l}{\emph{B. Decoding and search}}\\
& B1 & $3$ & $5$ & two-stage decoding against cell-level only and score only \\
& B2 & $4$ & $5$ & perimeter penalty $\lambda_{\mathrm{tv}}\in\{0,0.01,0.1,1\}$ \\
& B3 & $3$ & $5$ & mask shape: free-form, connected (default), a single rectangle \\
& B4 & $4$ & $3$ & Gumbel scale $\varepsilon\in\{0,0.5,1,2\}$ \\
& B5 & $5$ & $3$ & block cap $\Bpar\in\{\text{none},1,2,4,8\}$ \\
& B6 & $3$ & $3$ & truncation score: integrality gap, uniform, relaxation value \\
& B7 & $3$ & $5$ & budget schedule: fixed, adaptive growth, the online rule of \Cref{prop:rank} \\
& B8 & $7$ & $5$ & $\rho\in\{0.01,0.02,0.05,0.1,0.2,0.5\}$ at $n\le10^{4}$, and fixed $k$ against fixed $\rho$ \\
& B9 & $3$ & $1$ & the same three action spaces under an exact oracle, isolating their cost \\
& B10 & $4$ & $3$ & dual refresh $\Rlp\in\{1,5,20,\infty\}$ \\
\midrule
\multicolumn{5}{l}{\emph{C. Architecture}}\\
& C1 & $2$ & $3$ & U-Net against an FCN at equal parameter count \\
& C2 & $4$ & $3$ & capacity in $\{0.5,1,5,20\}$M parameters \\
& C3 & $4$ & $3$ & depth in $\{2,3,4,5\}$ at $1$M parameters \\
& C4 & $2$ & $3$ & two CoordConv planes appended to the twelve channels \\
& C5 & $3$ & $5$ & GroupNorm, BatchNorm, LayerNorm \\
& C6 & $4$ & $3$ & refinement off, and on with trigger $\alpha\in\{2,4,8\}$ \\
\midrule
\multicolumn{5}{l}{\emph{D. Training}}\\
& D1 & $3$ & $5$ & imitation only, imitation then reinforcement, reinforcement only \\
& D2 & $5$ & $5$ & training set size in $\{64,128,256,512,896\}$ \\
& D3 & $3$ & $5$ & oracle: exact local branching, LB-RELAX, a greedy selector \\
& D4 & $4$ & $3$ & hindsight relabelling window $w\in\{0,1,2,4\}$ \\
& D5 & $2$ & $5$ & single-family against joint multi-family training \\
& D6 & $2$ & $3$ & curriculum on instance size \\
\midrule
\multicolumn{5}{l}{\emph{E. Transfer}}\\
& E1 & $2$ & $5$ & trained at $N\le2000$ customers, tested to $10^{6}$ \\
& E2 & $3$ & $5$ & zero-shot to the two held-out families \\
& E3 & $2$ & $5$ & uniform-trained against mixed-trained, on a clustered test set \\
& E4 & $2$ & $3$ & one shared checkpoint against a per-family fine-tune of it \\
& E5 & $3$ & $3$ & the expander family, testing the coupling-advantage limit in \Cref{prop:expander} \\
\bottomrule
\end{tabular}
\end{table}

The block-cap sweep follows \Cref{cor:blocks}. With $k=100$ and $s\approx10$ on set cover, the threshold $\Bdag\asymp k/(\cdelta^{2}s)$ places the useful range at a handful of blocks. The selected arms cover that range. The stronger condition $k\gtrsim\Bpar^{2}s$ would instead place the estimated range below one block, illustrating the effect of the threshold's dependence on the block count.

\subsection{Metrics and compute accounting}
\label{app:metrics}

Alongside the primal integral in \Cref{sec:prelim}, the protocol defines the confined primal integral $\CPI(T)=\int_{0}^{T}p(t)e^{t/\alpha}\,dt$ with $\alpha<0$ \citep{berthold2021confined}, which exponentially discounts late progress. Extending a run from $T$ to $2T$ can change the ranking induced by $\PI$. The limit of $\CPI$ as $T\to\infty$ instead provides a finite, discounted measure of the full trajectory, although finite-horizon rankings can still change before that limit. The tables use $\PI$; comparison with $\CPI$ would assess sensitivity to the deadline.

A recorded trace supports three compute conventions. \emph{Compute-charged} includes selection and repair time and is used for our tables and figures. \emph{Compute-free} counts repair time alone, isolating behavior when selection overhead is excluded. \emph{Compute-matched} fixes the round count and compares neighborhood quality separately from its cost; the quality ratio $r$ in \Cref{tab:timing} uses this convention. Comparing the conventions separates the effect of selection latency from the improvement achieved per round.

\subsection{Horizon, threading and statistics}

The evaluation criteria were fixed before the runs: a primary target of a $10\%$ primal-integral reduction over the best runnable baseline at large scales; a secondary target of staying within $10\%$ of CL-LNS at the smaller scale; and a $5\%$ reduction over region growing to assess the contribution of learning. The transfer comparison uses random selection as its reference. Limited benefit was expected on bin packing because its layout has limited metric structure.

The main experiments use a $200$\,s horizon to study the transient regime in which selection cost affects the primal integral. The MIPLIB evaluation uses the separate protocol stated in \Cref{tab:miplib-ours}. \Cref{rem:plateau} distinguishes this transient behavior from the longer-horizon regime dominated by the plateau; hour-scale performance requires evaluation at that horizon. All methods in our evaluation run single threaded, keeping repair parallelism fixed when comparing selection costs.

Configurations in the main tables use five policy seeds. The ablation design specifies the seed counts in \Cref{tab:ablations}: five for studies supporting main claims, three for the remaining studies, and one for the study requiring an exact oracle solve in every arm. For stochastic sub-solvers, each policy seed is paired with a solver seed shared across methods. Aggregates are means and standard deviations over instances. Significance is assessed by a Wilcoxon signed-rank test on paired per-instance differences, with Holm correction across the methods in a table.

\clearpage
\section{Related work}
\label{app:related}

FOVEA draws on learned destroy operators, visual representations of combinatorial state, and shared models for multiple problem families.

\subsection{Summary}
\label{app:related-summary}

The comparison follows three axes: the objects over which a destroy action is defined, the cost of encoding the search state, and the interface used to introduce a new problem family. FOVEA defines a cell mask on a fixed grid and decodes it into variables, with classical destroy signals supplied as input channels. The subsections below compare this interface with variable masks, visual operators, and shared models.

Two close visual approaches clarify the design choices: \citet{gimf2025} address scale transfer through resolution scaling, and \citet{vitsp2025} select bounding boxes on rendered tours for exact repair. We use input normalisation on a fixed canvas and compare cell masks with rectangular regions. \Cref{app:related-theory} distinguishes our single-neighborhood analysis from the convergence result for the search in \citet{burns2025lnls}.

\subsection{Learned destroy operators}

The variable-constraint incidence matrix supports several representations of a destroy decision. \citet{song2020generallns} arrange one row per integer variable with the incumbent value appended as a column, reduce it by PCA, and pass the result to a fully connected network that partitions the integer variables into $k$ disjoint subsets. They train by multiclass classification against the best of several sampled decompositions and, in a second variant, by REINFORCE. Both this approach and \Cref{sec:layout} use the constraint matrix as input: row-wise PCA and a multilayer perceptron process it as feature vectors, while FOVEA reorders the matrix to encode adjacency and applies a convolutional network.

Subsequent methods commonly use a bipartite graph encoder in the manner of \citet{gasse2019branching} and score each variable. Within this representation, local branching provides an imitation target \citep{sonnerat2021neurallns}; it is also the expert used by FOVEA (\Cref{app:method}). Its computational cost motivates the relaxation-based surrogate of \citet{huang2023lbrelax} and our use of hindsight relabelling. Other approaches use reinforcement learning from improvement rewards \citep{wu2021rllns} or a contrastive objective over positive and negative neighborhoods \citep{huang2023cllns}. The contrastive method reports the strongest results on the synthetic families of \Cref{tab:milp} under its evaluation protocol.

An extension of this representation uses a tripartite graph and a branching-guided component \citep{yuan2025btbslns}. That work reports results on synthetic integer programs, as well as the MIPLIB 2017 results quoted in \Cref{tab:miplib}. The synthetic comparisons use different instance distributions and evaluation protocols, so the reported values do not establish a direct ranking across the two studies.

Routing also supports a coarser action space. \citet{li2021l2d} use a transformer to score $O(N)$ candidate route neighborhoods and select one subproblem, rather than score individual variables.

A per-variable scoring head has $n$ outputs, while the routing selector above scores $O(N)$ candidates. The graph encoders discussed here incur $O(\nnz(A))$ cost per message-passing layer and store the graph and its activations on the device. These computation and memory costs can grow with instance size even when the destroy budget is fixed.

\citet{xu2026inch} formalise the destroy set as a binary mask $m\in\{0,1\}^{n}$ with $m_{i}=1$ when variable $i$ is released. Their coordinatewise sampling treats variables independently; the masks come from classical rules or from the logits and gradients of a repair model. Their setting uses learned repair for constraint satisfaction. FOVEA trains a separate destroy policy and uses an exact repair formulation for mixed integer programming. It represents the mask as a field over a grid, where adjacency supports connectivity, perimeter, morphological post-processing and the isoperimetric analysis of \Cref{sec:analysis}.

\subsection{Rasterised and visual combinatorial optimisation}

Visual combinatorial optimisation uses rendered state for construction, diagnosis, and iterative improvement.

\citet{zhao2021bpp} use a height map with a convolutional policy for online three-dimensional bin packing, using the natural grid geometry of the problem. The follow-up work of \citet{zhao2022pct} uses a packing configuration tree to represent placements beyond those reachable on a fixed-resolution grid. The resolution sweep in \Cref{tab:ablations} examines the corresponding representation tradeoff for FOVEA.

\citet{ling2020tsp} render travelling salesman instances as images and predict a tour with a convolutional network, and \citet{ling2021milp} do the same for 0-1 programs, solving them end to end. Both use the rendered instance to construct a solution. \citet{steever2022mip} encode constraint matrices as images for structural analysis and instance classification. \citet{vnsolver2023} render a graph under a chosen layout and classify the image with a standard vision backbone to decide whether the graph is Hamiltonian, outperforming a Graphormer baseline. VN abbreviates vision-based neural. Their comparison of circular, spiral, and random layouts identifies layout as an important design choice. Our layout ablation examines its effect on neighborhood selection.

\citet{gimf2025} study resolution scaling in a model that fuses an instance image with graph features for multi-objective routing and knapsack. They increase the total pixel count linearly with instance size, so the image side length scales as $\sqrt{n}$. Their ablation reports a slight in-distribution improvement from this scaling and a much larger one out of distribution: without it, both their full model and a simpler single-stream fusion variant fall below CNH, their graph-based baseline. The out-of-distribution result identifies a scale-transfer problem relevant to a fixed canvas. Their diagnosis attributes the decline to a shift in occupied-pixel density, including density within a patch, relative to training. FOVEA uses input normalisation to address this shift (\Cref{app:method}), with resolution scaling included as a comparison in the resolution ablation.

\citet{vitsp2025} prompt a vision-language model to propose axis-aligned bounding boxes on a rendered travelling salesman tour and repair the interior of each box exactly. This visual improvement operator is closely related to FOVEA. Their action space is a small number of rectangles, whereas a segmentation mask is an arbitrary field over cells. They use a general-purpose model through an API without task-specific training, with latency measured in seconds; FOVEA uses a small trained network with millisecond CPU latency. Their setting is one problem with native coordinates, while the canvas protocol is shared across four families. \Cref{rem:iso-role} shows that axis-aligned squares attain the grid isoperimetric optimum at a given area. Free-form masks additionally represent destroy sets with multiple components, whose coupling cost is quantified by \Cref{cor:blocks}.

\citet{visualdiffusion2025} observe that rasterisation decouples inference cost from instance size. FOVEA formalises this observation, accounts for setup and update costs in \Cref{sec:canvas}, and evaluates it at $10^{6}$ variables.

\subsection{Constructive neural solvers}

Constructive neural solvers generate an initial solution, typically by autoregressive decoding over an attention encoder \citep{kool2019attention}, with rollout baselines exploiting problem symmetry \citep{kwon2020pomo} or a heavy decoder for large-scale generalisation \citep{luo2023lehd}. FOVEA takes an incumbent as input. In the evaluated families, a cheap construction supplies this incumbent in a small fraction of the time budget, leaving most of the horizon for improvement. The operator comparisons therefore start from an existing solution. Strong domain solvers, including HGS and FILO2 for routing, appear in \Cref{tab:baselines} and run end to end on the same clock.

\subsection{Cross-problem models}

Shared models for multiple combinatorial problems use several interface designs. Heterogeneous bipartite graphs, in the lineage of \citet{gasse2019branching}, describe a new family by adding node and edge types. Adapters \citep{drakulic2024goal} keep one backbone of mixed-attention blocks and give each family its own lightweight input and output modules over a shared codebook, so a new family is a new adapter. Token sequences \citep{unico2025} recast solving as a Markov decision process and tokenise its trajectories, so a new family is a new tokenisation. Attribute vectors \citep{zhou2024mvmoe,berto2024routefinder} describe variants of one problem by switching features on and off, providing an interface for variants within routing. These approaches adapt the encoder interface to the problem family.

The canvas interface uses a fixed spatial grid whose adjacency encodes interaction. A new family supplies an entity layout and channel semantics while sharing the network parameters. The zero-shot transfer protocol in \Cref{sec:experiments} evaluates these shared parameters on held-out families.

Two terminology distinctions clarify the comparison. \citet{l2seg2025} learn to segment for vehicle routing by partitioning a tour into stable and unstable edge sequences. This is one-dimensional sequence labelling, while FOVEA predicts a field over a grid. The UniCO cited above is the work of \citet{unico2025}. A separate ICLR 2025 paper with the same name reduces problems to a matrix-encoded general TSP.

\subsection{Machinery and theoretical foundations}
\label{app:related-theory}

The policy is a U-Net \citep{ronneberger2015unet} in the fully convolutional lineage of \citet{long2015fcn}, with GroupNorm \citep{wu2018groupnorm} to normalise samples independently within batches that mix instance scales. The sampler is the Gumbel-top-$k$ construction of \citet{kool2019stochasticbeam}, with exploration controlled by the Gumbel noise scale. \Cref{app:method} specifies the sampling distribution and the role of the noise scale.

Classical destroy heuristics supply input signals, training targets, and comparison operators. Relaxation-induced neighborhood search \citep{danna2005rins} releases the variables on which the incumbent and the relaxation disagree, providing channel $4$ of \Cref{tab:channels}. Local branching \citep{fischetti2003localbranching} searches a Hamming ball and serves as our oracle. Adaptive operator portfolios \citep{pisinger2007alns} choose among fixed rules by measured performance and serve as a baseline. These roles connect the canvas policy to established neighborhood-search methods.

On the theory side, \citet{burns2025lnls} give the first non-asymptotic convergence result for a hybrid large neighborhood local search, by reduction to a block Langevin diffusion. Their analysis uses a fixed partition of a continuous relaxation and covers randomised or cyclic block selection. Its stated scope excludes discrete decision variables and strongly non-convex potentials, including mixed integer programming. The analysis of \Cref{sec:analysis} characterises the coupling and improvement conditions of a single neighborhood choice. Convergence of the resulting search trajectory remains a separate question.

\clearpage
\section{Notation}
\label{app:notation}

\begin{table}[h]
\centering
\small
\begin{tabular}{@{}ll@{\hskip 1.4em}ll@{}}
\toprule
Symbol & Meaning & Symbol & Meaning \\
\midrule
$n,m$ & variables, constraints & $\lay$ & variable layout \\
$s_{j},s$ & support size, its mean & $\ents,\ent$ & entity set, association \\
$\seff$ & truncated mean support size & $H,W,g$ & canvas height, width, resolution \\
$d$ & constraints per variable & $\Vpq$ & variables in cell $(p,q)$ \\
$k,\rho$ & destroy budget, fraction & $\canvas^{(t)}$ & the canvas at round $t$ \\
$\rhostar$ & coupling-advantage envelope & $\heat$ & predicted heatmap \\
            & peak location & & \\
$\region,\SR$ & admissible region, its preimage & $\Psel$ & selected cells \\
$S$ & destroy set, $S\subseteq\SR$ & $\regfam$ & admissible region family \\
$\Srand$ & uniform random $k$-subset & $\pol$ & the policy \\
$\inc$ & incumbent & $\Rlp$ & dual refresh period \\
$\sub{S}$ & restricted subproblem & $\varepsilon$ & Gumbel scale \\
$\gain(S)$ & best improvement from $S$ & $\vartheta$ & entity mass threshold \\
$\cut{S}$ & straddling constraints & $\tsel,\trep$ & selection, repair time \\
$\cutE{S},\inE{S}$ & equality rows cut, contained & $\tau$ & round time $\tsel+\trep$ \\
$\thetaE$ & equality fraction & $t_{0}$ & time to first feasible \\
\midrule
$\delta,\eta$ & locality scale and defect \Aloc{} & $\PI(T)$ & primal integral \\
$\kappa$ & density constant \Aden{} & $\CPI(T)$ & confined primal integral \\
$C,B$ & isoperimetric constant and & $\gamma_{t},\gamma_{\infty}$ & primal gap, plateau \\
      & boundary-component cap, \Areg{} & $L$ & Lipschitz constant \Alip{} \\
$\cdelta$ & locality constant, $\delta=\cdelta\sqrt{s/n}$ & $\lambda$ & gap contraction rate \\
$\cg$ & resolution constant of $\gstar$ & $\nstar$ & critical instance size \\
$\Bpar$ & boundary components & $\gstar$ & saturation resolution \\
$\Bcc$ & connected components & $\rhomax$ & maximum variable density \\
$\Bdag,\Bddag$ & block thresholds & $\Lampar$ & mean density on the boundary \\
$\per(\region)$ & $\mathcal{H}^{1}(\partial\region)$ & $\Lammax$ & global maximum density \\
$\epsilon$ & expansion defect, \Cref{prop:expander} & $\zeta$ & budget efficiency \\
$\Bstar$ & boundary components of the & $r$ & quality ratio $\lambda_{G}/\lambda_{F}$ \\
         & minimiser in \Cref{def:per-of-set} & $c_{0},c_{1}$ & fixed and amortised per-round cost \\
$g_{0}$ & reference resolution, $g_{0}\gg g$ & $N$ & rounds in a run \\
$\beta$ & oversample factor & $\alpha$ & confinement rate of $\CPI$ \\
\bottomrule
\end{tabular}
\end{table}

The locality constant $\cdelta$ and the resolution constant $\cg$ use distinct subscripts on $c$. The boundary component count $\Bpar$, constrained by \Areg{}, counts each boundary curve, while the connected component count $\Bcc$ counts connected regions: a region with one hole has one component and two boundaries. The mean density along a mask boundary $\Lampar$ and the global maximum $\Lammax$ also serve different roles. Substituting the latter into \Cref{cor:gstar} overestimates the saturation resolution by $\sqrt{\Lammax/\Lampar}$ on clustered instances where the uniformity assumption fails. The contraction rate in \Cref{prop:critical} is $\lambda$, the destroy fraction is $\rho$, the dual refresh period is $\Rlp$, and the region is $\region$. The predicted destroy fraction at the peak of the coupling-advantage envelope is $\rhostar$. The symbol $\epsilon$ denotes the expansion defect in \Cref{prop:expander} and the relative error tolerance in \Cref{cor:gstar}. The entity mass threshold is $\vartheta$, distinct from the policy parameter $\theta$ in $\pol$.

\end{document}